\documentclass[a4paper,UKenglish,cleveref, autoref,thm-restate,numberwithinsect]{lipics-v2021}
\nolinenumbers

\DOIPrefix{}

\usepackage{mathtools}
\usepackage{thmtools}
\usepackage{thm-restate}
\usepackage[utf8]{inputenc}
\usepackage{tikz}
\usetikzlibrary{trees,shapes, positioning}
\usepackage[appendix=append,bibliography=common]{apxproof}
\usepackage{relsize}
\usepackage{stackengine}
\stackMath{}
\usepackage{amsmath}
\usepackage{amssymb}
\usepackage{todonotes}
\usepackage{graphics,color}
\usepackage{hyperref}
\hypersetup{
  colorlinks,
  linkcolor={red!50!black},
  citecolor={blue!50!black},
  urlcolor={blue!30!black},
  breaklinks=true
  }
  \usepackage{booktabs}
  \usepackage{pgfplots}
  \pgfplotsset{compat=1.18}

\newtheoremrep{theorem}{Theorem}[section]

\newtheoremrep{proposition}[theorem]{Proposition}
\newtheoremrep{claim}[theorem]{Claim}
\newtheoremrep{corollary}[theorem]{Corollary}
\newtheoremrep{lemma}[theorem]{Lemma}
\newtheoremrep{example}[theorem]{Example}
\newtheoremrep{definition}[theorem]{Definition}

\newcommand{\NN}{\mathbb{N}}

\usepackage{xparse}
\usepackage{xspace}

\NewDocumentCommand{\Hcycle}{o}{\IfNoValueTF{#1}{\ensuremath{S}\text{-}Hamiltonian cycle\xspace}{\{#1\}\text{-}Hamiltonian cycle\xspace}}
\NewDocumentCommand{\Hpath}{o}{\IfNoValueTF{#1}{\ensuremath{S}\text{-}Hamiltonian path\xspace}{\{#1\}\text{-}Hamiltonian path\xspace}}
\NewDocumentCommand{\Hcyclebm}{o}{\IfNoValueTF{#1}{\ensuremath{\safebm{S}}\text{-}Hamiltonian cycle\xspace}{\{#1\}\text{-}Hamiltonian cycle\xspace}}
\NewDocumentCommand{\Hpathbm}{o}{\IfNoValueTF{#1}{\ensuremath{\safebm{S}}\text{-}Hamiltonian path\xspace}{\{#1\}\text{-}Hamiltonian path\xspace}}

\newcommand{\edge}[1]{\ensuremath{\{#1\}}}

\newcounter{sqindex}

\newcommand{\set}[1]{\{#1\}}

\newcommand{\safebm}[1]{\texorpdfstring{\bm{#1}}{#1}}

\usepackage{bm}

\title{The \texorpdfstring{$\bm{S}$}{S}-Hamiltonian Cycle Problem}

\author{Antoine Amarilli}{Univ. Lille, Inria, CNRS, Centrale Lille, UMR 9189
CRIStAL, F-59000 Lille, France}{a3nm@a3nm.net}{https://orcid.org/0000-0002-7977-4441}{}

\author{Arthur Lombardo}{DI ENS, ENS, PSL University, CNRS, Inria, Paris, France
  \and Univ. Lille, CNRS, Inria, Centrale Lille, UMR 9189
CRIStAL, F-59000 Lille, France}{arthur.lombardo@inria.fr}{https://orcid.org/0009-0006-1557-1932}{}

\author{Mikaël Monet}{Univ. Lille, Inria, CNRS, Centrale Lille, UMR 9189 CRIStAL, F-59000 Lille, France}{mikael.monet@inria.fr}{https://orcid.org/0000-0002-6158-4607}{}

\authorrunning{A. Amarilli, A. Lombardo and M. Monet}

\acknowledgements{This work was funded in part by the French government
    under management of Agence Nationale de la Recherche
    (ANR) as part of the “France 2030” program, reference
    ANR-23-IACL-0008 (PR[AI]RIE-PSAI).}

\Copyright{Antoine Amarilli, Arthur Lombardo, and Mikaël Monet}

\ccsdesc[500]{Mathematics of computing~Graph theory}

\keywords{Graph, Cycle, Hamiltonian}

\begin{document}

\maketitle

\begin{abstract}
Determining if an input undirected graph is Hamiltonian,
i.e., if it has a cycle that visits every vertex exactly once,
is one of the most famous NP-complete problems.
We consider the following generalization of Hamiltonian cycles:
for a fixed set $S$ of natural numbers, we want to visit
each vertex of a graph $G$
exactly once and ensure that any two consecutive
vertices can be joined in $k$ hops for some choice of $k \in S$.
Formally, an \emph{$S$-Hamiltonian cycle} is
a permutation $(v_0,\ldots,v_{n-1})$ of the vertices of~$G$
such that, for $0 \leq i \leq n-1$, there exists a walk between 
$v_i$ and $v_{i+1 \bmod n}$ whose length is in $S$.
(We do not impose any constraints on how many times vertices can be visited as intermediate
vertices of walks.)
Of course Hamiltonian cycles in the standard sense correspond to
$S=\{1\}$.
We study the \emph{$S$-Hamiltonian cycle problem}
of deciding whether an input graph $G$
has an $S$-Hamiltonian cycle. Our goal is to
determine the complexity of this problem
depending on the fixed set~$S$.
It is already known that the
problem remains NP-complete for $S=\{1,2\}$, whereas it is trivial for
$S=\{1,2,3\}$ because any connected graph 
contains a $\{1,2,3\}$-Hamiltonian cycle.

Our work classifies the complexity of this problem 
for most kinds of sets $S$, with
the key new results being the following: we have NP-completeness 
for $S = \{2\}$
and for $S = \{2, 4\}$, but tractability for $S = \{1, 2, 4\}$, for $S = \{2, 4,
6\}$, for any superset of these two tractable cases, and for $S$ the infinite set of all odd integers. The remaining open cases
are the non-singleton finite sets of odd integers,
in particular $S = \{1, 3\}$. Beyond cycles, we also discuss the complexity of finding
$S$-Hamiltonian paths, and show that our problems are all tractable on graphs of bounded cliquewidth.

 \end{abstract}

\section{Introduction}\label{sec:intro}

The Hamiltonian cycle problem is a well-known problem in graph theory and
complexity theory. It asks, given an undirected graph, whether it contains a
\emph{Hamiltonian cycle}, i.e., a cycle that visits every vertex exactly once.
This problem is well-known to be NP-complete, and the same is true if we want to
find a \emph{Hamiltonian path}, i.e., a path that visits every vertex exactly
once.

Faced with the hardness of this problem, one natural relaxation is to allow
greater distance bounds between pairs of consecutive vertices instead of
requiring them to be adjacent. The simplest
case is to allow hops of length $1$ or $2$ (instead of just 1).
Equivalently, we consider the
\emph{square} $G^2$ of the input graph $G$, in which we connect any two vertices
that can be joined by a path of length at most~$2$, and we ask whether $G^2$ is
Hamiltonian. It is still NP-complete to determine whether an input graph admits
such a relaxed cycle~\cite{On_graphs_with_Underg_1978}, even though many graph
families are known whose square is always 
Hamiltonian: 
this is the case of two-connected
graphs~\cite{The_square_of_e_Fleisc_1974,muttel2013short} and
other
families~\cite{hendry1985square,abderrezzak1999induced,ekstein2012hamiltonian,ekstein2021best,ekstein2024most}.
(We note that there are also other different (and seemingly unrelated) ways to define
generalizations of Hamiltonicity, e.g., taking the
iterated line graph as in~\cite{chartrand1968hamiltonian,sabir2014spanning}.)

The natural next question is to allow hops of 1, 2, or 3. However, it is
then known that the cube $G^3$ of every connected
graph $G$ is Hamiltonian~\cite{karaganis1968cube,sekanina1960ordering}, and this result is a commonly used trick
for efficient enumeration algorithms
(the so-called ``even-odd trick''~\cite{karaganis1968cube,sekanina1960ordering} (see also~\cite{uno2003two}). The case of hops of length at most~$3$
immediately implies that relaxed Hamiltonian cycles always exist in connected
graphs for any other set of the form $S = \{1, \dots, k\}$ with $k > 3$.

However, this does not address the question of other fixed sets of allowed hop
lengths. For instance, we can consider $S = \{2\}$: this asks whether we can
find a permutation $(v_0,\dots,v_{n-1})$ of the vertices of the input graph~$G$ in which any two
consecutive vertices (including the endpoints $v_{n-1}$ and $v_0$) can be joined
by a walk of length \emph{exactly} two. 
We call such a cycle a \emph{\Hcycle[2]}.
This question relates to the Hamiltonicity of the
\emph{common neighborhood graph} of~$G$~\cite{On_the_common_n_Hossei_2014}, i.e., the graph where we connect pairs of vertices that
are joined by such walks. 
Hossein et al.~\cite[Theorem~4.1]{On_the_common_n_Hossei_2014} 
in particular give a sufficient condition for graphs to have a \Hcycle[2], but
they know that the condition is not necessary~-- so to our knowledge there is no
characterization of the graphs admitting a \Hcycle[2] and no complexity bounds
on the problem of recognizing them. Alternatively, one can consider, say, $S = \{1, 2, 4\}$: do all graphs admit a \emph{$\set{1,2,4}$-Hamiltonian
cycle}, i.e., a cycle where any two consecutive vertices are joined by a walk
with some length in~$S$?

\subparagraph*{Contributions.}
In this paper we study which graphs admit $S$-Hamiltonian cycles, and the
complexity of recognizing them, depending on the fixed set~$S$. We show that,
depending on the choice of~$S$, the problem behaves very differently:
there are sets~$S$ for which every (connected) graph admits an \Hcycle, other
cases where the
decision problem is NP-hard, and last some cases where the decision problem is
non-trivial but efficiently solvable. We entirely classify these behaviors,
except for one remaining family of open cases.

The main results come from the sets $S = \set{2}$, $S = \set{2,4}$, $S = \set{2,4,6}$ and $S = \set{1,2,4}$.
For $S = \set{2}$, we prove 
that the \Hcycle[2] problem
is NP-complete. This proceeds by a reduction from the classical \Hpath[1]
problem: we reduce it to the \Hpath[2] problem with specified endpoints by building the incidence
graph of the input graph and attaching a triangle, and we then reduce that problem to the \Hcycle[2] problem by attaching a specific gadget.
For $S = \set{2,4}$, we show that the \Hcycle[2,4] problem is NP-hard under Cook
reductions.
Specifically, we reduce from the \Hcycle[1,2] problem (which is known to be
NP-hard~\cite{On_graphs_with_Underg_1978})
and build graphs 
by attaching a specific gadget on every pair of sufficiently close vertices.
As for $S = \set{2,4,6}$, we show that a connected graph admits an \Hcycle[2,4,6] if and
only if it is non-bipartite: this gives a linear-time recognition algorithm, and
we show that a witnessing cycle can also be built in linear time when one exists.
Finally, for $S = \set{1,2,4}$, we prove that every connected graph admits a
\Hcycle[1,2,4] by first showing the same result on trees. The proof is by induction
on the number of vertices, and it also yields a linear-time algorithm to build a
witnessing cycle.

The remaining family of unclassified cases are the non-singleton finite sets of
odd integers, i.e., the sets $S = \set{1,\dots,2k+1}$ for $k>1$, for which
the complexity of the \Hcycle{} problem remains open.

We also discuss three variants of the \Hcycle{} problem, the first being the variants when the set $S$ is infinite,
which are all tractable. The second variant is the study of the complexity of \Hpath{}, i.e., we study the complexity of deciding
whether an input graph contains an \Hpath{} (possibly with specified endpoints);
we show that in most cases this has the same complexity as the \Hcycle{}
problem. Finally, we also study the tractability of the \Hcycle{} problem 
when restricting the graphs that are allowed as inputs: we show that, for any
fixed finite set~$S$, the problem is tractable when the input graphs are
required to have bounded cliquewidth

\subparagraph*{Paper outline.}
In \cref{sec:prelim}, we introduce the definitions and notations that are used in this paper,
and give more formal details about the \Hcycle{}
problem and its variants.
In \cref{sec:result}, we present our main result and a decision tree that delineates
the tractable, intractable, and open cases. We then give the detailed proofs for
each case. First,
\cref{sec:np} is dedicated to the proofs of the NP-completeness of the \Hcycle[2] and \Hcycle[2,4] problems.
Then, \cref{sec:trivial} deals with the \Hcycle[1,2,4] problem, and \cref{sec:poly} with the \Hcycle[2,4,6] problem.
We then discuss problem variants in \cref{sec:variants}. More specifically,
in \cref{sec:infinite}, we study the complexity of the \Hcycle{} problem when
$S$ is an infinite set.
We discuss the complexity for the \Hpath{} variants of our problem in
\cref{sec:pathandse,sec:124sehpath}.
Finally, in \cref{sec:restricted}, we study the case of graphs of bounded cliquewidth.
We conclude in \cref{sec:conc}.
For lack of space, most detailed proofs are deferred to the appendix.

\section{Preliminaries and Problem Statement}\label{sec:prelim}

\subparagraph*{Standard graph notions.}
All graphs in this paper are undirected, do not feature self-loops, are non-empty
(i.e., have at least one vertex), and are simple
graphs (i.e., without parallel edges).
Furthermore, we always assume that graphs are connected.
Formally then, a \emph{graph} $G = (V, E)$ consists of a finite set of vertices $V$ (also denoted
$V(G)$) and a set of edges $E$ (also denoted $E(G)$) which is a set of pairs of vertices,
i.e., the edge \edge{u,v} connects the distinct vertices $u$ and $v$, and we also say
the edge is \emph{incident} to~$u$ and to~$v$, and that $u$ and $v$ are
\emph{adjacent}.
A \emph{spanning subgraph}
of~$G$ is a graph $(V, E')$ with $E' \subseteq E$ that is connected.
A \emph{tree} is an acyclic graph, and it is a \emph{rooted tree} if a vertex has been designated as the root.
A \emph{spanning tree} $T$ of $G$ is a spanning subgraph of $G$ which is a tree.

A \emph{walk} in a graph $G$ is a sequence of vertices
$(v_0, v_1, \dots, v_k)$ such that for each $i \in \{0, \dots, k-1\}$
the vertices $v_i$ and $v_{i+1}$ are adjacent.
The \emph{length} of the walk is defined as the number of edges traversed, which is \(k\).
A \emph{simple path} in a graph $G$ is a walk where all vertices are distinct, and a
\emph{simple cycle} is a simple path where the first and the last vertices are adjacent.
The length of a cycle is then defined as one more than the length of its defining walk.
We require graphs to be \emph{connected}, i.e., for any choice of two vertices
there exists a walk that connects them.

The \emph{incidence graph} of $G = (V, E)$ is the graph $G' = (V', E')$
whose vertices are the vertices and edges of~$G$ and where each edge of $G$ is connected in $G'$ to its incident
vertices; formally $V' \coloneq V \uplus E$ and $E' \coloneq \{\{u, e\} \mid e
\in E, u \in e\}$.
The \emph{line graph} of~$G$, denoted $L(G) = (V', E')$, is the graph on
the edges of~$G$ where two edges of $G$ are connected in $G'$ if they are incident in $G$ to a
common vertex; formally $V' \coloneq E$ and $E' \coloneq \{\{e,e'\} \mid e, e'
\in E, e \cap e' \neq \emptyset\}$. 
We call $G$ \emph{bipartite} if its vertex set can be partitioned into two
non-empty sets $X$ and $Y$, called the \emph{parts}, such that each edge of $G$ connects a vertex in $X$
to a vertex in $Y$. We recall that a graph is bipartite if and only if it does
not admit an odd-length simple cycle,
and that checking whether an input graph is
bipartite can be done in linear time.

\subparagraph*{Problem definition.}
We now formally define the main notions studied in this paper. In the following, we assume $S$ is a non-empty finite subset of $\NN^+$, the positive integers (we will study the case of infinite $S$ in \cref{sec:infinite}).

\begin{definition}[\textbf{\safebm{$S$}-path}]\label{def:spath}
  Let $G = (V, E)$ be a graph, and let $S \subseteq \NN^+$.
  A sequence of vertices $P = (v_1, v_2, \dots, v_n)$ is called an \emph{$S$-path} of $G$ if,
  for each $i \in \{1, \dots, n-1\}$, the vertices $v_i$ and $v_{i+1}$ are
  connected by a walk of length $\ell \in S$, i.e.,
  there exists some length $\ell \in S$ and some walk in $G$
  of length $\ell$ that starts at $v_i$ and ends at $v_{i+1}$.
  Furthermore,~$P$ is called a \emph{simple $S$-path} if all its
  vertices are distinct.
\end{definition}

\begin{definition}[\textbf{\Hpathbm} and \textbf{\Hcyclebm}]\label{def:hpath}
    Let $G = (V, E)$ be a graph, and let $S \subseteq \NN^+$.
    An~\emph{\Hpath} of~$G$ is a simple $S$-path that contains all the vertices of $G$.
    An~\emph{\Hcycle} of~$G$ is an \Hpath{} whose first and last vertices are connected by a walk of a length $\ell \in S$ in $G$.
\end{definition}

For example, the classical Hamiltonian cycle problem corresponds to
the case $S = \{1\}$, as two vertices are connected by a walk of length $1$
precisely when they are adjacent.

\subparagraph*{Alternative definitions.}
Note that our definitions above require the
existence of a \emph{walk} of a length in $S$ connecting any two consecutive
vertices. The rest of the paper only considers this choice of definition, but we briefly
discuss here some alternative possible choices, in order to dispel
potential confusion.

First, we do \emph{not} require that the witnessing walk is a \emph{shortest path}. In
other words, we do not require that the \emph{distance} between two consecutive
vertices $u$ and $v$ belongs to~$S$, only that \emph{some possible walk} has length in~$S$: there may
be shorter walks connecting~$u$ and~$v$ with a length not
in~$S$. 
Requiring shortest paths as connecting walks would be a
more stringent requirement in general, even though there are some sets $S$
for which both definitions coincide (e.g.,
sets of the form $S = \{1,\ldots,k\}$ for some $k > 0$).
We note that, already for the set $S = \{2\}$,
requiring shortest paths between consecutive vertices would be different,
and it would relate to the Hamiltonicity of the 
\emph{exact-distance square} of the input
graph~\cite{Characterizing_Bai_Y_2024}. We do not study this question in the
present work.

Second, we do \emph{not} require that the witnessing walks between consecutive vertices
are \emph{simple paths}. Requiring simple paths would again be a more stringent
requirement in general~--though again it would make no difference for sets of the form
$S = \{1,\ldots,k\}$, and it would also make no difference for
$S = \{2\}$. Again, in this work we do not study this alternative definition
requiring simple paths as connecting walks.

Third, we reiterate that our definitions do not pose any restriction on the
number of times that a vertex may occur in the connecting walks. In particular,
note that vertices that have already been visited earlier in the permutation can still be traversed as intermediate vertices of later connecting walks.

Last, we point out one important consequence of our choice to allow
connecting walks that may feature repeated vertices: when two vertices are
connected by a walk of length $\ell>0$, then they are also
connected by a walk of length $\ell + 2$, simply by going back-and-forth on the
last edge.
Thus, in our problem, we can always assume without loss of generality that the
set $S$ is \emph{closed under subtraction of $2$}, i.e., whenever $\ell \in S$ is an
allowed distance and $\ell-2 > 0$ then the distance $\ell-2$ is also allowed.
In other words, whenever $S$ contains a number $\ell$, then we
assume that it also contains all smaller numbers of the same
parity -- for clarity we will always write down these numbers explicitly.

\subparagraph*{Computational complexity.}
The main problem we study is the \Hcycle{} problem for fixed sets $S$, which asks,
given a graph $G$, whether $G$ admits an \Hcycle. We also study in
\cref{sec:variants} the complexity of two variants
the \emph{\Hpath{} problem}, for which we need to decide whether $G$ has an \Hpath, and
the
\emph{\Hpath problem with specified endpoints}, which asks whether 
$G$ admits an \Hpath{}
that starts and ends at some specified endpoints given as input.
For all these three problems, the complexity is always measured as usual as a function
of the input graph. We note that these problems are always in NP for any
choice of fixed~$S$: the certificate is the permutation of vertices $s = (v_1,
\ldots, v_n)$, and it is easy to check in polynomial time that the certificate
is correct. We will show cases when the problem is NP-hard, and other cases when
the problem can be efficiently solved: either because it is trivial (i.e., all
graphs have an \Hcycle{}), or because there is an efficient algorithm to decide
it. In the efficient cases, we will also study the complexity of the problem of
efficiently finding a witness (e.g., an \Hcycle{}) when one exists, and give tractable algorithms for this task.

For some problems, we will only be able to show NP-hardness
under \emph{Cook reductions} (i.e., polynomial-time Turing reductions), rather than the more commonly used 
\emph{Karp reductions} (i.e., polynomial-time many-one reductions).
Recall that a many-one reduction is a 
reduction that maps each instance of a problem to exactly one instance of
another problem
whereas a Turing reduction is allowed to use the oracle of the target problem multiple times on different instances.
Thus, saying that a problem is NP-complete under Cook reductions is a weaker statement than saying that it is NP-complete under Karp reductions,
but it still implies that there is no polynomial-time
algorithm for the problem unless P = NP.

\section{Main Result}\label{sec:result}
Having formally defined our various problems, we now present in
this section the main result of this paper, which classifies the
complexity of the \Hcycle{} problem, up to some remaining open
cases.  Recall that sets $S$ are considered to be finite, unless
stated otherwise~-- we revisit this choice in \cref{sec:variants}.
Results on the other variants (\Hpath, with or
without specified endpoints), and on the generalization of our
problems when $S$ can be infinite, will also be presented in
Section~\ref{sec:variants}. 

\begin{theoremrep}\label{thm:main}
  For every finite and non-empty set $S \subseteq \mathbb{N}^+$
the \Hcycle{}-problem is either
  NP-complete under Cook reductions, in P, trivial (i.e., true on all graphs), or open, as depicted in \cref{fig:dectree}.
  Further, when the problem is in P or trivial,
  we can compute a witnessing \Hcycle{} in linear time in the input graph.
\end{theoremrep}

We explain in the remaining of this section the roadmap to prove \cref{thm:main}. We start by recalling the relevant known results from the literature.

\begin{proposition}[\cite{On_the_Computat_Karp_1975}]\label{prop:1hcycle}
    The \Hcycle[1] problem is NP-complete.
\end{proposition}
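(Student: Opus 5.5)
This proposition is the classical Hamiltonian cycle problem, since an \Hcycle[1] is exactly a Hamiltonian cycle in the usual sense. I would not reprove it from scratch. I would derive it from Karp's list of NP-complete problems~\cite{On_the_Computat_Karp_1975}, checking that the problem there coincides with our formulation.

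\textbf{Membership in NP.} This was already observed in \cref{sec:prelim}: the certificate is the permutation $(v_0,\ldots,v_{n-1})$. Checking that consecutive vertices (cyclically) are adjacent takes polynomial time.

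\textbf{Hardness.} I would recall the standard chain: reduce 3-SAT (or vertex cover) to directed Hamiltonian cycle, then directed to undirected Hamiltonian cycle. The second step replaces each vertex $v$ by a path $v_{\mathrm{in}} - v_{\mathrm{mid}} - v_{\mathrm{out}}$ and each arc $(u,v)$ by the edge $\{u_{\mathrm{out}}, v_{\mathrm{in}}\}$. The middle vertices have degree $2$, which forces any Hamiltonian cycle to traverse each triple in order. This yields a bijection with directed Hamiltonian cycles.

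\textbf{Matching the paper's conventions.} Our graphs are assumed connected, simple and loopless. The reduction produces simple loopless graphs. If the output is disconnected, the answer is trivially no, so the reduction can map it to a fixed connected non-Hamiltonian graph, such as a path on three vertices. The one delicate point is exactly this compatibility with the paper's restriction to connected inputs, and it is handled by that replacement. Everything else is classical.
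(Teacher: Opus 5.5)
Your proposal is correct and takes the same route as the paper, which gives no proof of its own and simply cites Karp; the vertex-splitting construction $v_{\mathrm{in}} - v_{\mathrm{mid}} - v_{\mathrm{out}}$ you recall is exactly Karp's reduction from directed to undirected Hamiltonian circuit. Your handling of the connectivity convention is sound, and mapping disconnected outputs to a path on three vertices is a good choice, since under the paper's definition a single edge (two vertices) would count as having a $\{1\}$-Hamiltonian cycle.
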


\begin{proposition}[\cite{On_graphs_with_Underg_1978}]\label{prop:12hcycle}
    The \Hcycle[1,2] problem is NP-complete.
\end{proposition}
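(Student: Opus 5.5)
The plan is to show membership in NP directly and prove hardness by a Karp reduction from the \Hcycle[1] problem, which is NP-complete by \cref{prop:1hcycle}. Membership is immediate, as discussed in \cref{sec:prelim}. The certificate is a permutation of the vertices, and checking whether two vertices are joined by a walk of length $1$ or $2$ amounts to checking adjacency in $G^2$, which takes polynomial time. The real work is hardness, i.e., deciding whether the square $G^2$ of the input graph $G$ is Hamiltonian.

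For the reduction, given an instance $H$ of the \Hcycle[1] problem, I would build a graph $G$ whose square has a Hamiltonian cycle if and only if $H$ has one. Squaring adds many shortcuts, so the construction must use gadgets that force any Hamiltonian cycle of $G^2$ to be essentially local. The natural tool is pendant vertices. A leaf $\ell$ attached to $v$ has, in $G^2$, neighbourhood exactly $\{v\} \cup N_G(v)$. If $v$ carries two leaves $a_v, b_v$, then the two cycle-neighbours of each leaf must be drawn from the small set $N_G[v]$, which severely constrains how the cycle can pass near $v$.

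Concretely, I would start from $H$ and do two things. First, subdivide each edge $\{u,w\}$ of $H$ into a path $u - x_{uw} - x'_{uw} - w$. Second, attach pendant leaves to the vertices $u$ and to the subdivision vertices. The leaves should make every subdivision path behave like a ``switch'' in $G^2$: the cycle either traverses the path end to end, which corresponds to using edge $\{u,w\}$ of $H$, or it enters and exits from the same side and only collects the local leaves. The proof then has two directions.

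\begin{itemize}
\item (\emph{Completeness.}) A Hamiltonian cycle of $H$ yields a Hamiltonian cycle of $G^2$. It traverses the chosen edges' gadgets fully, and it makes local detours to visit the vertices of the unused gadgets and the leaves.
\item (\emph{Soundness.}) Every Hamiltonian cycle of $G^2$ projects onto a Hamiltonian cycle of $H$. To show this, I would argue from the leaves. Each leaf has degree at most $\deg_G(v)+1$ in $G^2$, and a single gadget cannot supply enough cycle-neighbours for all of its leaves. Hence each original vertex $u$ is ``passed through'' exactly twice via fully traversed gadgets, which gives a 2-regular spanning subgraph of $H$. A connectivity argument then shows it is a single cycle.
\end{itemize}

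The main obstacle is the soundness direction, namely designing the gadget so that $G^2$ admits no unintended shortcuts. In $G^2$, a vertex $u$ is adjacent to the subdivision vertices of all its incident edges at once, so the cycle might jump between gadgets of different edges at $u$ in ways that correspond to no edge of $H$. My first attempt is to make the gadgets long enough and leaf-heavy enough that a local counting argument pins down the behaviour. The argument would show that any visit to the interior of a gadget must either cross it or return, and that the number of ``entries'' at each $u$ is exactly two. If plain subdivision turns out to be too permissive, I would reduce instead from Hamiltonian cycle on cubic (or bounded-degree) graphs, which is still NP-complete. That makes the case analysis at each vertex finite and checkable. This is essentially the route of~\cite{On_graphs_with_Underg_1978}.
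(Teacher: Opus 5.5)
Your NP-membership argument is fine, and the paper gives no proof of its own here: it simply quotes \cite{On_graphs_with_Underg_1978}. The hardness part, however, has a genuine gap. The soundness direction, which you only assert, is false for the gadget you describe, whatever numbers of leaves you attach. The cause is that a subdivided edge gadget $u - x - x' - w$ can be crossed \emph{twice} in $G^2$, along the two vertex-disjoint routes $x', \Lambda(x), u$ and $x, \Lambda(x'), w$. Here $\Lambda(v)$ denotes the possibly empty run of leaves attached to~$v$, visited consecutively. So a Hamiltonian cycle of $G^2$ can follow a closed walk of $H$ that uses an edge twice.

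Concretely, let $H$ be two triangles $pp_1p_2$ and $qq_1q_2$ joined by the bridge $pq$; this $H$ is not Hamiltonian. Write $p - s - t - q$ for the subdivided bridge and $p = c_0, c_1, \dots, c_8, c_0$ for the subdivided triangle at~$p$, with $\bar c_i$ the analogues at~$q$. Consider the sequence $t, \Lambda(s), p, W_p, s, \Lambda(t), q, W_q$, closed back to~$t$, where $W_p = \Lambda(c_1), c_1, \Lambda(c_2), c_2, \dots, \Lambda(c_8), c_8, \Lambda(p)$ and $W_q$ is its mirror image. This is a Hamiltonian cycle of~$G^2$: every hop is an edge of~$G$ or a walk of length~$2$ whose middle vertex is some $c_{i+1}$, some $\bar c_{i+1}$, or one of $p, q, s, t$. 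The cycle follows the closed walk $p, p_1, p_2, p, q, q_1, q_2, q, p$ of $H$. Hence no counting argument from the leaves can extract a 2-regular spanning subgraph of $H$, and the claim that each original vertex is ``passed through exactly twice via fully traversed gadgets'' fails.

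Without leaves the failure is even more direct. Subdividing a 2-connected non-Hamiltonian graph such as $K_{2,3}$ gives a 2-connected graph, whose square is Hamiltonian by Fleischner's theorem~\cite{The_square_of_e_Fleisc_1974}. Restricting to cubic $H$ does not touch this double-crossing defect either, since the defect lies in the edge gadget rather than in vertex degrees.

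The missing idea is to keep the edges of $H$ unsubdivided and put a rigid gadget on the vertices instead. Attach a pendant path $v - v' - v''$ to every vertex $v$ of $H$ (assume $n \geq 3$; smaller instances are solved directly). In $G^2$ we have $N_{G^2}(v'') = \{v, v'\}$ and $N_{G^2}(v') = \{v, v''\} \cup N_H(v)$. Hence any Hamiltonian cycle of $G^2$ contains $v, v'', v'$ consecutively. The second cycle-neighbour of $v'$ cannot be $v$, since that would close the triangle $v v'' v'$, so it is some original vertex $\sigma(v) \in N_H(v)$. Each original vertex $w$ has only one cycle slot besides $w''$, so $\sigma$ is injective and hence a permutation of $V(H)$. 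The cycle therefore reads $v, v'', v', \sigma(v), \sigma(v)'', \sigma(v)', \sigma^2(v), \dots$, and since it covers all $3n$ vertices, $(v, \sigma(v), \dots, \sigma^{n-1}(v))$ is a Hamiltonian cycle of $H$. Conversely, a Hamiltonian cycle $(v_1, \dots, v_n)$ of $H$ yields the Hamiltonian cycle $(v_1, v_1'', v_1', v_2, v_2'', v_2', \dots, v_n, v_n'', v_n')$ of $G^2$. Each hop $v_i' \to v_{i+1}$ (indices mod $n$) is the walk $v_i' - v_i - v_{i+1}$ of length~$2$.
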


The following will be particularly useful to us:

\begin{proposition}[\cite{karaganis1968cube,sekanina1960ordering}]\label{prop:123hcycle}
 The  \Hcycle[1,2,3] problem is trivial. More precisely,
    for every graph $G$ and $u,v \in V(G)$ with $u\neq v$, there
    exists a \Hpath[1,2,3] from $u$ to $v$ in $G$.
\end{proposition}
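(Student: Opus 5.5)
The plan is the classical cube-of-a-tree argument. First we reduce to trees: fix a spanning tree $T$ of $G$. Any walk in $T$ is also a walk in $G$, so a \Hpath[1,2,3] of $T$ from $u$ to $v$ is also one of $G$. It therefore suffices to prove the following statement by induction on $|V(T)|$: for every tree $T$ with at least two vertices and every pair of distinct vertices $u,v$, there is a \Hpath[1,2,3] of $T$ from $u$ to $v$. If $|V(T)|=1$ there is nothing to prove beyond the trivial one-vertex path. Triviality of the \Hcycle[1,2,3] problem then follows by taking $u,v$ adjacent in $G$: the closing hop has length $1$. We also check the small cases with fewer than three vertices by hand.

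For the inductive step, let $P$ be the unique $u$--$v$ path in $T$, and let $e=\{u,w\}$ be the first edge of $P$, so $w$ is the neighbour of $u$ on $P$ (possibly $w=v$). Removing $e$ splits $T$ into two subtrees: $T_u$ containing $u$, and $T_w$ containing $w$ and $v$. Each is strictly smaller than $T$. We build the path in two pieces.

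In $T_u$, we need a Hamiltonian $\{1,2,3\}$-path from $u$ that ends at a vertex adjacent to $u$, or at $u$ itself if $T_u=\{u\}$. If $T_u$ has a vertex other than $u$, pick any neighbour $x$ of $u$ in $T_u$ and apply the induction hypothesis to $(T_u,u,x)$. In $T_w$ we need a Hamiltonian $\{1,2,3\}$-path from a vertex $y$ to $v$, with $y$ adjacent to $w$ or equal to $w$.
\begin{itemize}
\item If $w\neq v$, take $y=w$ and apply induction to $(T_w,w,v)$.
\item If $w=v$ and $T_w$ has other vertices, take $y$ a neighbour of $v$ in $T_w$ and apply induction to $(T_w,y,v)$.
\item If $T_w=\{v\}$, just use $v$.
\end{itemize}

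Concatenating the two pieces, the junction hop goes from $x$ (or $u$) to $y$ (or $w$) along $x,u,w,y$ in $T$. This walk has length at most $3$, and by parity padding every length up to $3$ is realised: lengths in $\{1,2,3\}$ are all allowed. All vertices are covered exactly once since $V(T_u)$ and $V(T_w)$ partition $V(T)$.

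The main obstacle is the endpoint bookkeeping, namely making sure the induction hypothesis is strong enough (arbitrary specified endpoints) to force both pieces to end next to the cut edge. The degenerate cases, $w=v$ and singleton subtrees, must be handled so the junction distance never exceeds $3$. With the case split above this seems to go through. The recursion also gives a linear-time construction, since each vertex is handled a constant number of times.
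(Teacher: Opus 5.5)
Your proof is correct, and it is essentially the classical argument of Sekanina and Karaganis, which the paper cites rather than reproduces: pass to a spanning tree, then induct by cutting the first edge $\{u,w\}$ of the $u$--$v$ path, using the arbitrary-endpoint hypothesis so that both pieces end at or next to the cut edge. One cosmetic point is that no parity padding is needed, since the junction walk $x,u,w,y$ (or its degenerate variants) already has length exactly $1$, $2$, or $3$.
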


Now, we present the new results that complete the classification of \cref{thm:main}.

\begin{restatable}{theorem}{twohcycle}\label{thm:2hcycle}
 The \Hcycle[2] problem is NP-complete.
\end{restatable}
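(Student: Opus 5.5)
Membership in NP was already noted in \cref{sec:prelim}, so the plan is to prove NP-hardness. I would reduce from the \Hpath[1] problem with specified endpoints, which is NP-complete by a standard variant of \cref{prop:1hcycle}. The reduction has two stages: first to the \Hpath[2] problem with specified endpoints, then to the \Hcycle[2] problem.

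The guiding observation is that a \Hcycle[2] of $G$ is exactly a Hamiltonian cycle of the common neighbourhood graph $N_2(G)$, in which $u \neq v$ are adjacent iff they share a neighbour in $G$. So I will work throughout with $N_2$ of the constructed graphs.

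\textbf{First stage.} Given $G$ with endpoints $s,t$, take its incidence graph $G'$ on $V \uplus E$. In $N_2(G')$, two vertices of $V$ are adjacent exactly when they are adjacent in $G$, since they then share an edge-vertex. Two elements of $E$ are adjacent exactly when they are adjacent in $L(G)$. Because $G'$ is bipartite, $N_2(G')$ has no edge between $V$ and $E$: it is the disjoint union of $G$ and $L(G)$.

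To make the path cover both parts while staying faithful to $G$ on the $V$-side, I would attach a triangle $\{t,a,b\}$ at the endpoint $t$. The new vertices give the only crossing between the two parts: $a$ and $b$ are $2$-adjacent to each other, to $t$, to all neighbours of $t$ in $G'$ (i.e.\ the edges incident to $t$), and $t$ gains $2$-neighbours $a,b$.

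The intended \Hpath[2] then starts at $s$ and traverses $V$ along a Hamiltonian path of $G$ ending at $t$. It goes through $a$ to the $E$-side, covers $E$, and ends at $b$. The new specified endpoints are $s$ and $b$.

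Two things must be checked:
\begin{itemize}
\item Covering $E$ must always be possible in $L(G)$ starting from an edge incident to $t$ (or after small extra padding). I would first try restricting the source problem to instances, e.g.\ cubic graphs or instances with suitable pendant structure, where $L(G)$ is guaranteed to admit a Hamiltonian path from the appropriate edges. Failing that, I would add extra pendant structure making the $E$-side trivially traversable.
\item Conversely, any \Hpath[2] from $s$ to $b$ must restrict on $V$ to a Hamiltonian path of $G$ from $s$ to $t$. This holds because crossing is only possible through $\{a,b,t\}$ and $b$ is the final vertex.
\end{itemize}

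\textbf{Second stage.} Given $H$ with endpoints $x,y$, attach a gadget whose vertices, in $N_2$, form a path whose only connections to $V(H)$ are at $x$ and $y$. Then any Hamiltonian cycle of $N_2$ must traverse the gadget as a block between $x$ and $y$, leaving a \Hpath[2] of $H$ from $x$ to $y$.

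A first attempt is to add a vertex $c$ adjacent to $x$ and $y$ together with a pendant vertex $w$ on $c$. This makes $w$ $2$-adjacent to exactly $x$ and $y$, which is what we want. However, it also creates unwanted $2$-adjacencies: $x$ becomes $2$-adjacent to $y$, and $c$ becomes $2$-adjacent to $N(x) \cup N(y)$. I would therefore refine the gadget, for instance by subdividing further and using two private connectors $x \!-\! c_1$ and $y \!-\! c_2$ joined through a short path, so that every new vertex has $2$-degree $2$ inside the gadget. I would then verify that the gadget vertices of degree $2$ in $N_2$ force the traversal.

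\textbf{Main obstacle.} I expect the main difficulty to be the exact tuning of both gadgets: the triangle placement in the first stage and the cycle-closing gadget in the second. The $N_2$ adjacencies they introduce must create no shortcuts, for both the forward and the backward direction of the equivalence. Of the two, the $E$-side coverage in the first stage is the most delicate point.
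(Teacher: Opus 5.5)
Your two-stage architecture is the same as the paper's: Hamiltonian path with specified endpoints, then the incidence graph plus a triangle, then a cycle-closing gadget. However, both stages have genuine gaps, and the first one makes the reduction as you specified it incorrect. This is exactly the point you flagged.

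\textbf{First stage.} With new endpoints $s$ and $b$, the $E$-part of the path is entered from $a$ and left towards $b$. So it must be a Hamiltonian path of $L(G)$ whose \emph{both} ends are edges incident to $t$. That would make $L(G)$ Hamiltonian, which does not follow from $G$ having an $s$--$t$ Hamiltonian path.

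A concrete counterexample is $G$ the path $s - u - t$, which is a yes-instance. In your graph the edge-vertex $w_{su}$ has $w_{ut}$ as its only $2$-neighbour, so it must be an endpoint of any $\{2\}$-Hamiltonian path. Hence there is no such path from $s$ to $b$. More generally the construction fails whenever $\deg_G(t)=1$ and $|E|\ge 2$.

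The remedies you mention (cubic instances, pendant padding) are not worked out, and they are not needed. The missing idea is to take the second endpoint to be an edge-vertex $y$ incident to $s$. Then cover $L(G)$ using the Hamiltonian path $P=(v_1,\dots,v_k)$ of $G$ itself:
\begin{itemize}
\item Set $y$ and $v_1v_2$ aside, and assign every other edge to its later endpoint along $P$.
\item Visit the groups for $i=k,\dots,3$. Each group is a clique of $L(G)$.
\item End group $i$ with $v_{i-1}v_i$, which shares $v_{i-1}$ with everything in group $i-1$.
\item Finish with $v_1v_2$ and then $y$.
\end{itemize}
This gives, unconditionally, a Hamiltonian path of $L(G)$ from an edge at $t$ to a prescribed edge at $s$, and it is what the paper does. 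The paper hangs the triangle off $\beta'$ by an edge. With this endpoint choice, your triangle through $t$, traversed as $t,a,b$, would also work.

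\textbf{Second stage.} This stage is only a plan: no gadget is given, and the refinement you sketch cannot have the property you ask for. Any new vertex $c_1$ adjacent to $x$ is joined by a walk of length $2$ to every other neighbour of $x$ in $H$. Likewise, $x$ gains such walks to every gadget neighbour of $c_1$. So the connectors unavoidably have $2$-neighbours in $N_H(x)\cup N_H(y)$, and you must show these are never used.

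For instance, with $x - c_1 - p - c_2 - y$ the vertex $p$ does force $x,p,y$ to be consecutive. But $c_1,c_2$ can then be spliced between vertices of $N_H(x)$ and $N_H(y)$, which reintroduces the shortcuts from your first attempt. What is needed is that the cycle-neighbours of the connectors themselves are forced inside the gadget.

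The paper achieves this with an explicit $8$-vertex gadget. It has private connectors $d_0$ (adjacent to $\alpha$) and $d_1$ (adjacent to $\beta$). It also has pendant vertices $d_3,d_5,d_7$ whose $2$-neighbourhoods are exactly $\{d_0,d_1\}$, $\{d_0,d_6\}$ and $\{d_1,d_4\}$. This pins down the block $d_6,d_5,d_0,d_3,d_1,d_7,d_4$, and then forces the whole gadget to be traversed in one piece between $\alpha$ and $\beta$.
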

\begin{restatable}{theorem}{twofourhcycle}\label{thm:24hcycle}
 The \Hcycle[2,4] problem is NP-complete under Cook reductions.
\end{restatable}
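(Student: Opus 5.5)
The plan is to reduce from the \Hcycle[1,2] problem, which is NP-complete by \cref{prop:12hcycle}. The first observation is that in a \emph{bipartite} graph every walk of even length joins two vertices of the same part. Hence a \Hcycle[2,4] of a bipartite graph can never move between the parts, so it cannot exist. Any hard instance must therefore contain an odd cycle, and a cycle that covers both "sides" must cross between them through the non-bipartite part of the graph.

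\textbf{Encoding.} Given a graph $G$ for the \Hcycle[1,2] problem, I would first take its incidence graph $G'$.

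\emph{Original vertices.} Two original vertices $u,v$ are joined in $G'$ by a walk of length $2$ iff $\{u,v\}\in E(G)$. They are joined by a walk of length $4$ iff their distance in $G$ is at most $2$. So a \Hcycle[2,4] restricted to the original vertices of $G'$ is exactly a \Hcycle[1,2] of $G$.

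\emph{Edge-vertices.} The edge-vertices form the other part of $G'$. They must also be visited, and the cycle must cross between the two parts. I would handle both issues with gadgets, and this is where the Cook reduction comes in. For every pair $(u,v)$ of distinct vertices at distance at most $2$ in $G$, there are polynomially many such pairs, I would build a graph $H_{u,v}$. It consists of:
\begin{itemize}
\item $G'$, plus pendant or helper vertices attached to the edge-vertices, chosen so that the edge-vertex side admits a \Hcycle[2,4]-style traversal between any two prescribed entry points;
\item a small non-bipartite gadget (an odd cycle with padding) attached at $u$ and $v$.
\end{itemize}
The gadget is designed so that any \Hcycle[2,4] of $H_{u,v}$ uses the gadget only to "exit" the original-vertex side at $u$, traverse the whole edge-vertex side and the gadget, and "re-enter" at $v$. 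The algorithm answers yes iff some $H_{u,v}$ admits a \Hcycle[2,4]. Membership in NP was already noted in the preliminaries.

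\textbf{Correctness.} The two directions are as follows.
\begin{itemize}
\item \emph{Completeness.} A \Hcycle[1,2] of $G$ contains some consecutive pair $(u,v)$. Removing the step between them gives a \Hpath[1,2] from $v$ to $u$ on the original vertices. This translates to a \Hpath[2,4] of $G'$ on the original vertices. I then close it through the gadget and the edge-vertex side of $H_{u,v}$.
\item \emph{Soundness.} By the parity observation, a \Hcycle[2,4] of $H_{u,v}$ can change parts only through the gadget. I would design the gadget (for instance with pendant vertices that have a unique reachable neighbourhood) so that the cycle crosses it exactly twice, at $u$ and at $v$. Then the portion of the cycle on original vertices is a contiguous \Hpath[2,4] from $u$ to $v$. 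This yields a \Hpath[1,2] of $G$ from $u$ to $v$, and since $u,v$ are within distance $2$ it closes to a \Hcycle[1,2].
\end{itemize}
I must also ensure that the helper vertices do not create spurious walks of length $2$ or $4$ between original vertices, which would shorten distances. This is why they are attached only on the edge-vertex side and far from original vertices.

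\textbf{Main obstacle.} I expect the hardest step to be designing the helpers so that the edge-vertex side (essentially the square of the line graph $L(G)$, augmented) always has a Hamiltonian path between the required endpoints, and doing this without adding parity-breaking shortcuts. Squares of line graphs are not Hamiltonian in general, so the helpers are needed. My first attempt would be to attach to each edge-vertex a pendant path of length $2$. This would mimic the "cube" flexibility of \cref{prop:123hcycle} along a spanning tree of $L(G)$, using a DFS-style traversal on the edge-vertex side. I would then verify by a case analysis that the pendant vertices force no additional constraints on the original-vertex side.
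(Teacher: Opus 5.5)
Your overall architecture matches the paper's: a Cook reduction from the \Hcycle[1,2] problem, the incidence graph, the parity obstruction, one oracle call per pair at distance at most $2$, and a non-bipartite gadget. However, both components that actually carry the proof are left undesigned, and your plan for the edge-vertex side starts from a false premise. Under $\{2,4\}$-walks the edge-vertices behave exactly like $L(G)^2$. By Nebesk\'y's theorem \cite{On_the_line_gra_Nebesk_1973}, the square of the line graph of every connected graph with at least three edges \emph{is} Hamiltonian, so no helpers are needed for that. The real difficulty is prescribing where the edge-side traversal starts and ends. The paper handles this with \cref{lem:12pathlinetree}: any two vertices at walk distance $2$ in $L(G)$ can be made consecutive in a \Hcycle[1,2] of $L(G)$. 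It combines this with a trick that absorbs the gadget into the line graph. Adding to $G$ new vertices $d_3',d_5'$ with edges $\edge{x,d_3'}$, $\edge{z,d_3'}$, $\edge{d_3',d_5'}$ turns the gadget vertices $d_1,d_2,d_4$ into edge-vertices of an enlarged graph. Cutting the cycle between $\edge{d_3',d_5'}$ and $\edge{z,d_3'}$ then gives a path from $d_4$ through all edge-vertices and $d_1$ to $d_2$.

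Your proposed helpers would actually break soundness. Nothing can be attached ``only on the edge-vertex side'', because any neighbour of an edge-vertex lies in the node-vertex parity class. So the middle vertex $p_e$ of a pendant path at $w_e$ joins the node side and creates spurious links. For instance, if $G$ contains a path $u,v,t,s$, then $u',p_{uv},p_{vt},s'$ is a valid $\{2,4\}$-sequence, via the walks $u'-w_{uv}-p_{uv}$, $p_{uv}-w_{uv}-v'-w_{vt}-p_{vt}$ and $p_{vt}-w_{vt}-t'-w_{ts}-s'$. Deleting the helpers then leaves $u$ and $s$, at distance $3$, consecutive.

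The gadget has a similar gap. You ask it to force the crossings to occur exactly at $u$ and $v$, which length-$4$ walks make hard, and you give no construction. The paper does not force this. It attaches $d_1,d_2$ to $x'$ and to a neighbour $z$ of $x$ on the way to $y$, and observes that only $d_3$ and $d_5$ are $\{2,4\}$-reachable from node-vertices. It then shows each of them is used exactly once to cross, so there are exactly two crossings. From this it concludes only that the endpoints of the node segment are one of $x',z'$ together with a node-vertex in $N_G[x]\cup N_G[z]$. These are within distance $2$ precisely because $z$ is adjacent to $x$. With this gadget attached directly at a pair at distance $2$, endpoints at distance $3$ would be possible, so the choice of attachment point is essential.
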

\begin{restatable}{theorem}{onetwofourhcycle}\label{thm:124hcycle}
Every connected graph admits a \Hcycle[1,2,4].
Furthermore, such a cycle can be found in linear time.
\end{restatable}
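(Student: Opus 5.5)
Since the walks in a spanning tree $T$ of $G$ are also walks in $G$, any \Hcycle[1,2,4] of $T$ is one of $G$. A spanning tree can be computed in linear time, so I will prove the statement for trees only. Graphs with one or two vertices are trivial, so the work is a structural induction on rooted trees with a strengthened invariant.

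\emph{Invariant.} For every rooted tree $T$ with root $r$ and at least two vertices, there is an \Hpath[1,2,4] of $T$ that starts at $r$ and ends at some child $c$ of $r$. Since $r$ and $c$ are adjacent, this path closes into an \Hcycle[1,2,4]. For a single-vertex tree I use the degenerate path $(r)$.

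For the inductive step, let $r$ have children $c_1,\dots,c_k$ with subtrees $T_1,\dots,T_k$ rooted at $c_1,\dots,c_k$. By induction, each $T_i$ with at least two vertices has a Hamiltonian $\{1,2,4\}$-path $P_i$ from $c_i$ to some child $d_i$ of $c_i$. If $T_i=\{c_i\}$, set $d_i\coloneq c_i$ and $P_i\coloneq (c_i)$. Each $P_i$ can be traversed in either direction. The relevant tree distances are:
\begin{itemize}
\item $c_i$ and $c_j$ are at distance $2$;
\item $d_i$ and $d_j$ are at distance $4$, or less with the same parity when some $T_i$ is a single vertex; every such distance is allowed, since only even distances at most $4$ occur;
\item $r$ and $d_i$ are at distance at most $2$;
\item $r$ and $c_i$ are at distance $1$.
\end{itemize}
The only forbidden link would be between some $c_i$ and some $d_j$ with $j\neq i$ and $d_j\neq c_j$, which are at distance $3$. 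The plan is to choose traversal directions so that consecutive subtrees are always glued $c$-to-$c$ or $d$-to-$d$.

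\emph{Gluing.} If $k$ is even, take the sequence
\[
r,\ \overrightarrow{P_1}\,(c_1\to d_1),\ \overleftarrow{P_2}\,(d_2\to c_2),\ \overrightarrow{P_3}\,(c_3\to d_3),\ \dots,\ \overleftarrow{P_k}\,(d_k\to c_k).
\]
If $k$ is odd, take
\[
r,\ \overleftarrow{P_1}\,(d_1\to c_1),\ \overrightarrow{P_2}\,(c_2\to d_2),\ \overleftarrow{P_3}\,(d_3\to c_3),\ \dots,\ \overleftarrow{P_k}\,(d_k\to c_k).
\]
In both cases every junction between consecutive subtrees is $d$-to-$d$ or $c$-to-$c$. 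The first hop $r\to c_1$ or $r\to d_1$ has length $1$ or $2$. The sequence ends at the child $c_k$ of $r$, which re-establishes the invariant for $T$. It is a permutation of $V(T)$ because the $P_i$ are Hamiltonian in the disjoint subtrees. Taking the root to be any vertex of the spanning tree then yields the \Hcycle[1,2,4] of $G$.

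\emph{Obstacle and complexity.} I expect the main obstacle to be identifying this invariant and the parity bookkeeping. Naive invariants, such as a path ending anywhere or ending at distance $2$ from the root, force a distance-$3$ junction between some $c_i$ and $d_j$. The choice ``start at the root, end at a child'' works precisely because it allows alternating orientations, with the parity of $k$ absorbed by starting at either $c_1$ or $d_1$. The degenerate single-vertex subtrees need a separate check, but they are harmless since there $c_i=d_i$. For linear time, the recursion processes each vertex once if each path is returned as a doubly linked list together with its endpoints $c_i$ and $d_i$, so that reversal is $O(1)$ (store an orientation flag) and concatenation is $O(1)$ per child. Computing the spanning tree by BFS is linear, so the whole algorithm runs in $O(|V|+|E|)$.
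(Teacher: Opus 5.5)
Your invariant (a Hamiltonian $\{1,2,4\}$-path from $r$ to a child of $r$) is the reverse of the paper's (a $\{1,2,4\}$-Hamiltonian cycle $(r,v_1,\dots,v_k)$ with $v_1$ a child of $r$). Your alternating orientations are also the paper's gluing trick. However, the step you call harmless is where the argument fails: your claim that any two $d_i,d_j$ are at even distance is false. If $T_i=\{c_i\}$ while $|V(T_j)|\ge 2$, then $d_i=c_i$ is a child of $r$ and $d_j$ is a grandchild of $r$, so they are at distance $3$. Since a tree is bipartite, every walk between them has odd length at least $3$, so no walk of length $1$, $2$ or $4$ joins them. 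Precisely because $c_i=d_i$, a ``$d$-to-$d$'' junction between a leaf child and a non-leaf child \emph{is} the forbidden $c$-to-$d$ link you identified yourself.

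Concretely, root the path $x,r,y,z$ (edges $xr$, $ry$, $yz$) at $r$. Here $k=2$, $P_x=(x)$ and $P_y=(y,z)$. Your even-$k$ sequence is $(r,x,z,y)$ or $(r,y,z,x)$, depending on how you order the two children. Both contain the hop between $x$ and $z$, which are at distance $3$. Your invariant is claimed for every rooted tree and is applied to every subtree, so this breaks the induction.

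The repair is to separate leaf children from non-leaf children, as the paper does. Run your alternation only over the $m$ non-leaf children, choosing the starting orientation by the parity of $m$ rather than of $k$: use $r\to c_1$ if $m$ is even and $r\to d_1$ if $m$ is odd, so that this block ends at $c_m$. Then visit the leaf children by $2$-hops $c_m\to u_1\to\cdots\to u_\ell$ (or $r\to u_1\to\cdots\to u_\ell$ if $m=0$), which ends at a child of $r$.

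This is the paper's construction read backwards. The paper starts $r,u_1,\dots,u_\ell$ and enters the first non-leaf child $v_1$ with a $1$- or $2$-hop. It then alternates $4$-hops $w_{i-1}\to w_i$ with $2$-hops $v_{i-1}\to v_i$, and closes to $r$ with a $2$-hop from $w_m$ or a $1$-hop from $v_m$, according to the parity of $m$. With this change the rest of your argument goes through.
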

\begin{restatable}{theorem}{twofoursixhcycle}\label{thm:246hcycle}
The graphs that have a \Hcycle[2,4,6] are exactly the non-bipartite graphs,
which implies that the \Hcycle[2,4,6] problem can be solved in linear time.
Furthermore, on graphs having a \Hcycle[2,4,6], we can construct one in linear time.
\end{restatable}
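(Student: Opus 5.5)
The plan is to prove the two directions of the characterization separately. Then we read off the linear-time decision procedure and the linear-time construction from the constructive proof of the ``if'' direction.

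\emph{Necessity.} Suppose $G$ is bipartite with parts $X$ and $Y$. Every walk of even length stays within one part, so consecutive vertices of a \Hcycle[2,4,6] would always lie in the same part. Following the cycle around, every vertex would then lie in the same part as $v_0$. This contradicts both $X$ and $Y$ being non-empty, so bipartite graphs have no \Hcycle[2,4,6].

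\emph{Sufficiency.} Let $G$ be connected and non-bipartite. The key observation is that in such a graph any two vertices $u,v$ are joined by walks of both parities: concatenate a path from $u$ to an odd cycle, that odd cycle, and a path back. It therefore suffices to order the vertices so that consecutive vertices have an even walk of length at most $6$ between them.

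I would reduce to \cref{prop:123hcycle} through the ``square of an odd cycle'' trick. Fix a shortest odd cycle $C$, or, to stay linear, any odd cycle found by BFS when it detects a non-bipartite edge. Take a spanning tree $T$ that contains all edges of $C$ except one edge $e=\{a,b\}$. Apply \cref{prop:123hcycle} to $T$ to get a \Hcycle[1,2,3] of $T$ whose hops are tree paths of length $1$, $2$ or $3$. Hops of length $2$ are already fine. Hops of length $1$ or $3$ must be repaired to even length at most $6$, and the natural idea is to go through the parity defect supplied by $C$.

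This repair is the main obstacle, because $C$ may be long and far from the hop. So I would not use \cref{prop:123hcycle} blindly. Instead I would work with the bipartition $(X,Y)$ of $T$: an $X$-vertex and a $Y$-vertex are joined by an even walk in $G$ only via $e$, and that walk can be long.

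A more robust strategy is a direct induction in the spirit of the even-odd trick. Root $T$ at a vertex $r$ lying on $C$. Produce an ordering that lists all vertices of $T$'s even layer with hops of tree length $2$ or $4$, using the standard DFS argument: list a vertex before its subtree at even depth and after it at odd depth, restricted to one parity class. The only troublesome transitions are the two switches between the class $X$ and the class $Y$. Arrange both switches to occur near $a$ and $b$, so each switch uses the edge $e$ together with at most a few tree edges. With $C$ chosen as a shortest odd cycle through a BFS from $r$, the switch hops become $x\to$ (tree edges to $a$) $\to b\to y$ with total length even and at most $6$.

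Concretely, I expect to need a lemma stating that for a tree $T$ and vertices $s,t$ in the same parity class, the class of $s$ admits a path from $s$ to $t$ with hops of tree length $2$ or $4$, whenever $s$ and $t$ are adjacent or at distance $2$. This is the analogue of the Sekanina path statement. It would be proved by induction on subtrees in linear time, and would then be glued at $e$. Verifying the endpoint flexibility of that lemma, and keeping the switch hops within length $6$, is where I expect the real work to lie.
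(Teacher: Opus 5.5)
\paragraph{Gap: the traversal lemma.} Your necessity direction is fine. The overall architecture for sufficiency is also workable: one spanning tree $T$, one non-tree edge $e=\{a,b\}$ inside a colour class, a traversal of each colour class of $T$, and two class switches through $e$. However, the step carrying all the weight is false as you state it. You claim that a colour class $X$ of $T$ can be listed with hops of tree length $2$ or $4$, both in your DFS remark and in your proposed lemma. Such a listing is exactly a $\{1,2\}$-Hamiltonian path of the half-square of $T$ on $X$ (vertex set $X$, adjacency $=$ tree distance $2$). The half-square of a tree can be an arbitrary tree. For example, subdividing every edge of the spider with five legs of length $3$ gives a $T$ whose half-square on the original vertices is that spider. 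The square of that spider has no Hamiltonian path:
\begin{itemize}
\item On a leg $p_1p_2p_3$, numbered from the centre, $p_3$ is adjacent in the square only to $p_1$ and $p_2$.
\item Hence if neither $p_2$ nor $p_3$ is an endpoint of the path, the other path-neighbour of $p_2$ must be the centre.
\item At most two legs contain endpoints, so the centre would need three path-neighbours, which is impossible.
\end{itemize}
So no DFS order can achieve hops of length $2$ or $4$ here. Restricting the even--odd order of $T$ itself to one class does not even bound the hops. Consequently the endpoint control you defer as ``the real work'' is also left unestablished.

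\paragraph{The fix.} The missing idea is to use the length-$6$ hops that $S$ permits. Apply \cref{prop:123hcycle} to the half-square of each colour class, which is connected since $T$ is. Then $\{1,2,3\}$-hops become tree walks of length $2,4,6$, with arbitrary distinct endpoints for free. This is precisely Step~2 of the paper's proof of \cref{lem:246hcyclenb}. The paper then takes a BFS forest rooted at all vertices of an odd cycle and goes around that cycle twice, switching classes at each cycle edge.

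With the fix, your single-edge variant closes easily. Let $a,b\in X$ and proceed as follows:
\begin{itemize}
\item Take the $X$-path from $a$ to $b$.
\item Take a $Y$-path from a tree-neighbour $s_Y$ of $a$ to a tree-neighbour $t_Y$ of $b$. If these are forced to coincide in a common neighbour $c$ and $|Y|\ge 2$, take instead $t_Y$ at tree distance $2$ from $c$. If $|Y|=1$, use the single vertex.
\item Switch classes via the walks $b,a,s_Y$ and $t_Y,\dots,b,a$, each of length $2$ or $4$.
\end{itemize}
You do not even need to find an odd cycle: any spanning tree of a non-bipartite graph has a non-tree edge inside one of its colour classes. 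For linear time, run the construction of \cref{prop:123hcycle} on a spanning tree of each half-square, rather than on the half-square itself, which can have quadratically many edges. Such a spanning tree is given by grandparent links plus a chain through the depth-$1$ vertices.
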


From these results, we can decide the complexity of any set $S$ thanks to the
decision process illustrated by the decision tree in \cref{fig:dectree}, where
we assume without loss of generality that $S$ is closed under subtraction
of~$2$.
In this decision tree, we denote by \emph{P} that the problem corresponding to
the sets $S$ considered is polynomial, by \emph{NP-c} that it is NP-complete
(possibly under Cook reductions),
by \emph{Trivial} that every graph admits such a cycle, and by \emph{open} that we do not know yet. The new results are in the bold nodes.
The detailed proof of the correctness of that decision process, which is also a
proof of \cref{thm:main}, is in Appendix~\ref{apx:main}.
That said, the only non-obvious reasoning in this proof is the following.
First, for sets containing \(6\), we know by \cref{thm:246hcycle} that the lengths \(2\), \(4\), and \(6\) are
enough to find a \Hcycle{} in any non-bipartite graph. On the other hand, having more even lengths will never allow to get a \Hcycle{} in a bipartite graph for obvious reasons, thus sets $S$ with only even numbers that are supersets of \set{2,4,6} can be grouped together.
Second, all sets $S$ that are supersets of \set{1,2,3} can be grouped together because
we know by \cref{prop:123hcycle} that the lengths
\(1,2\) and \(3\) are enough to find an \Hcycle{} in any graph. Third, all sets
$S$ that are supersets of \set{1,2,4} can be grouped together for the same reason thanks to \cref{thm:124hcycle}.

\begin{toappendix}
  \label{apx:main}
\begin{proof}
  Given a finite and non-empty set $S$, we assume without loss of generality
  that it is closed by subtraction of $2$. We then distinguish three cases:
  either $S$ contains only even numbers, or it contains even and odd numbers, or it contains only odd numbers.
  
  Case 1: $S$ contains only even numbers.
  Then there are two cases: either $S$ contains $6$ or it does not.
  First, if $6 \in S$,
  then the characterization is that an \Hcycle{} exists in a given graph $G$ 
  if and only if $G$ is non-bipartite. Specifically, if $G$ is bipartite, then
  the allowed lengths in $S$ will not allow us to move from one part of the
  bipartition to the other, so there is no \Hcycle{}. Conversely, if $G$ is
  not bipartite, then
we show an algorithm in \cref{thm:246hcycle} that efficiently creates a \Hcycle[2,4,6], which is also an \Hcycle{}.
  Second, if $6 \notin S$,
then we have either $S = \set{2}$ or $S = \set{2,4}$
  and both cases are proved to be NP-complete in \cref{thm:2hcycle} and \cref{thm:24hcycle} respectively.
  
  Case 2: $S$ contains even and odd numbers.
  In this case, $S$ must contain $1$ and $2$. If those are the only elements of $S$, then the \Hcycle[1,2] problem is NP-complete, as stated in \cref{prop:12hcycle}.
  Otherwise, $S$ must additionally contain either $3$ or $4$, in which case every graph contains an \Hcycle.
  Indeed, if $S$ also contains $3$, we saw in \cref{prop:123hcycle} that every graph admits a \Hcycle[1,2,3],
  which must also be an \Hcycle.
  Otherwise, if $S$ also contains $4$, then we prove in \cref{thm:124hcycle} that every graph admits a \Hcycle[1,2,4] which is an \Hcycle{} as well.

  Case 3: $S$ contains only odd numbers.
  Then if $S = \{1\}$ we know that the problem is NP-complete as stated in \cref{prop:1hcycle}.
  Otherwise, $S = \set{1,\dots,2k+1}$ with $k>1$ and this is the case that our
  work leaves open.
\end{proof}
\end{toappendix}

\begin{figure}[ht]
  \centering
  \begin{tikzpicture}[
    level 1/.style={sibling distance=4.6cm, level distance=2cm},
    level 2/.style={sibling distance=2.3cm, level distance=3cm},
    level 3/.style={level distance=2.5cm},
    question/.style={ellipse, draw},
    answer/.style={rectangle, draw, align=center},
    new/.style={ultra thick},
    xscale=1.05,
    edge from parent path={(\tikzparentnode.south) -- (\tikzchildnode.north)}
    ]
      
    \node[question] {Parity of elements in $S$?}
        child { node[question] {$6 \in S$?}
            child {node[answer, new] {\textbf{P} \\ $S = \set{2,4,6}$ \\ Thm~\ref{thm:246hcycle}}
            edge from parent node[left] {yes}}
            child {node[answer, new] {\textbf{NP-c} \\ $S = \set{2}$ \\ Thm~\ref{thm:2hcycle}, \\ $S = \set{2,4}$ \\ Thm~\ref{thm:24hcycle}}
            edge from parent node[right]{no}}
        edge from parent node[left, xshift=-5pt, yshift=5pt] {Only even numbers}}
        child { node[question] {$4 \in S$?} 
            child {node[answer, new] {\textbf{Trivial} \\ $S = \set{1,2,4}$ \\ Thm~\ref{thm:124hcycle}}
            edge from parent node[left] {yes}}
            child {node[question] {$3 \in S$?}
                child {node[answer] {Trivial \\ $S = \set{1,2,3}$ \\ See~\cite{karaganis1968cube,sekanina1960ordering}} edge from parent node[left] {yes}}
                child {node[answer] {NP-c \\ $S = \set{1,2}$ \\ See~\cite{On_graphs_with_Underg_1978}} edge from parent node[right] {no}}
            edge from parent node[right]{no}}
        edge from parent node[right]{Both}}
        child { node[question] {$3 \in S$?} 
          child {node[answer] {\emph{Open} \\ $S = \set{1,3,\ldots,k}$ \\ for some odd $k$}
            edge from parent node[left] {yes}}
            child {node[answer] {NP-c \\ $S = \set{1}$ \\ See~\cite{On_the_Computat_Karp_1975}}
            edge from parent node[right]{no}}
        edge from parent node[right, xshift=5pt, yshift=5pt]{Only odd numbers}};
      
    \end{tikzpicture}
    \caption{Decision tree for the complexity of the $S$-Hamiltonian cycle
    problem for $S$ a finite and non-empty subset of $\mathbb{N}^+$.}\label{fig:dectree}
\end{figure}
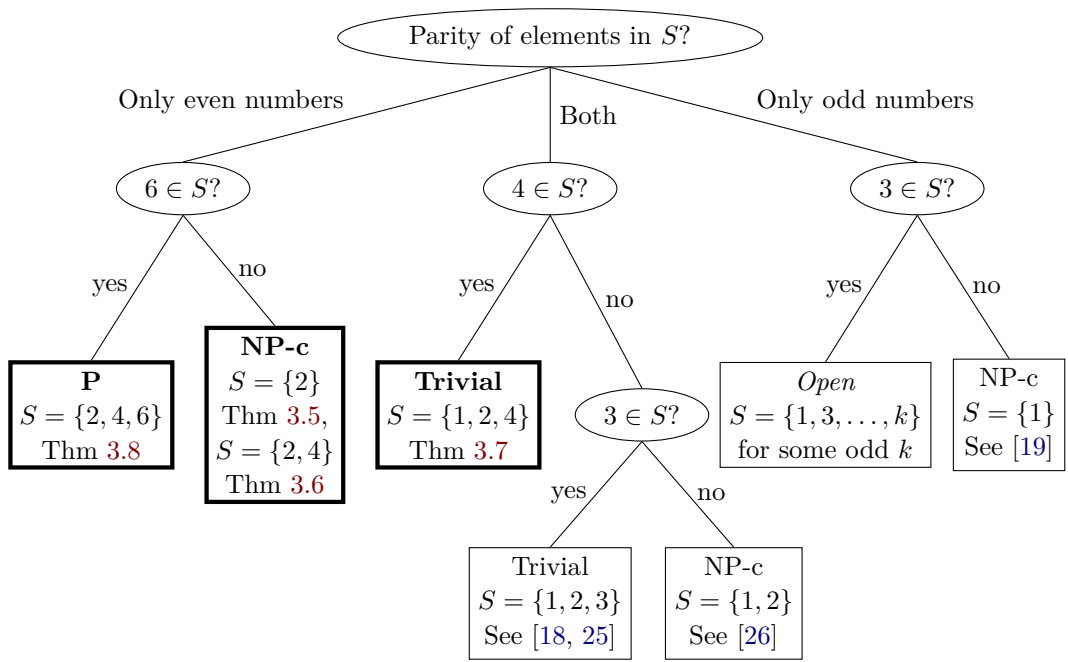

The next three sections are dedicated to proving the four main theorems of this section.

\section{NP-Complete Cases}\label{sec:np}
In this section, we start the proof of \cref{thm:main} by giving the NP-hardness
results for the cases $S = \{2\}$ (in \cref{sec:2}) and $S = \{2,4\}$ (in
\cref{sec:24}).

\subsection{The Case \texorpdfstring{$\bm{S = \{2\}}$}{S = \{2\}}}\label{sec:2}
\begin{toappendix}
  \subsection{Proofs for \texorpdfstring{$\bm{S = \{2\}}$}{S = \{2\}}}
  \label{apx:2}
\end{toappendix}

To show that the \Hcycle[2] problem is NP-hard, we proceed in two steps. We first reduce 
the \Hpath[1] problem with specified endpoints which is known to be NP-complete \cite{On_the_Computat_Karp_1975}, to the \Hpath[2] problem with specified endpoints.
Then, we reduce the latter problem to the \Hcycle[2] problem.
Let us show the first step:

\begin{theoremrep}\label{thm:2sehpath}
    The \Hpath[2] problem with specified endpoints is NP-complete.
\end{theoremrep}

\begin{proofsketch}
    Let $G$ be a graph and $\alpha \neq \beta$ be the specified endpoints in~$V(G)$ for 
    the Hamiltonian path problem.
    Construct a new graph~$H$ by first taking the incidence graph of~$G$ (which
    features an \emph{edge-vertex} for each edge of~$G$) and then
    connecting $\beta$ to a dangling triangle. Pick an arbitrary edge-vertex
    $\beta'$ adjacent to~$\alpha$ in~$H$.
    We can then conclude by establishing the following:
    $G$ has a \Hpath[1] connecting $\alpha$ and $\beta$ if and only if~$H$
    has a \Hpath[2] connecting $\alpha$ and $\beta'$.

    In particular, for the forward direction, to obtain a \Hpath[2] in $H$ we first follow the original \Hpath[1] of $G$, which is now a simple $\{2\}$-path from $\alpha$ to $\beta$ in $H$,
    then goes through the triangle to reach an edge-vertex.
    From there, we use a strengthening of~\cite[Theorem~6.5.4]{A_Textbook_of_G_Balakr_2012} that states that
    the line graph of a Hamiltonian graph is Hamiltonian, which allows us to visit all edge-vertices.
\end{proofsketch}

\begin{proof}
    We prove the claim by reduction from the \Hpath[1] problem with
    specified endpoints, which is NP-hard~\cite{On_the_Computat_Karp_1975}.

    Let $G$ be the input graph, and let $\alpha \neq \beta$ be the specified
    endpoints.
    We can assume without loss of generality that
    $G$ has at least two vertices, as otherwise the answer is trivial.
    Without loss of generality, up to modifying $G$, we can assume that $\alpha$ and $\beta$ are not adjacent.
    Indeed, if they are, we can add a new vertex to $G$ connected only to $\beta$ and make it the new $\beta$,
    which does not change the existence of a \Hpath[1] in $G$.

    We will build in polynomial time a new graph $H$ and choose two vertices $x
    \neq y$ in~$V(H)$ so as to ensure the following equivalence (*):
    $G$ has a \Hpath[1] from $\alpha$ to $\beta$ if and only if
    $H$ has a \Hpath[2] from $x$ to $y$.
    The construction of $H$ is as follows:
    Take the incidence graph $G'$ of~$G$ and connect $\beta$
to a triangle 
(\cref{fig:gadget2sehpath}).
    \begin{figure}[ht]
        \centering
        \begin{tikzpicture}
            \node[circle, draw] (d1) at (0,0) {$d_1$};
            \node[circle, draw] (d2) at (2,0) {$d_2$};
            \node[circle, draw] (d3) at (1,1.44) {$d_3$};

            \draw (d1) -- (d2);
            \draw (d2) -- (d3);
            \draw (d3) -- (d1);
        \end{tikzpicture}
        \caption{Triangle gadget $D$ used in the proof of \cref{thm:2sehpath}}\label{fig:gadget2sehpath}
    \end{figure}
    Formally:

    \begin{itemize}
      \item Build the incidence graph $G'$ by doing the following:
        \begin{itemize}
        \item For each vertex $v \in V(G)$, add a corresponding vertex $v' \in
          G'$. We call these \emph{node-vertices}.
        \item For each edge $\edge{u,v} \in E(G)$, add a new vertex $w_{uv}$ to
          $G'$ and connect it to both $u'$ and $v'$.
          We call these \emph{edge-vertices}.
        \end{itemize}
      \item Let $H$ be $G'$ together with a copy of the triangle gadget $D$
        from \cref{fig:gadget2sehpath} and an edge $\edge{\beta',d_1}$
        connecting $d_1$ to the vertex $\beta'$ of~$H$ that corresponds
        to~$\beta$ in~$G$.
    \end{itemize}
    Last, define $x \coloneq \alpha'$, and define $y$ to be any edge-vertex
    adjacent to $x$ in $H$. (Such an edge-vertex necessarily exists because the
    input graph $G$ is connected so $\alpha$ must have some
    incident edge in~$G$.)

    The construction of~$H$ given above is clearly in polynomial time, so all
    that remains is to establish correctness, i.e., proving the equivalence (*)
    stated above.
    Before diving into the proof, we will first see which pairs of vertices are connected by a walk of length $2$ in $H$.
    One can easily check that they are precisely the following:
    \begin{itemize}
        \item Node-vertices whose originals are adjacent in $G$: 
        $u', v' \in V(H)$ where $uv \in E(G)$, with the walk $u' - w_{uv} - v'$ of length $2$.
        \item Edge-vertices whose original edges share a common vertex: $w_{uv}, w_{vt}$ with $u, v,
          t \in V(G)$ and $uv, vt \in E(G)$, via the intermediate vertex $w_{uv} - v' - w_{vt}$.
        \item Special cases thanks to $D$:
          \begin{itemize}
            \item all vertices of $\{d_1, d_2, d_3\}$ are connected  to each
              other;
            \item $\beta'$ is connected to both $d_2$ and $d_3$;
\item $d_1$ is connected (through $\beta'$) to every edge-vertex corresponding
        to an edge of $G$ incident to $\beta$.
    \end{itemize}
    \end{itemize}

    We will now show equivalence (*). 
We first prove the forward direction.
Suppose that $G$ contains a \Hpath[1]
    $P = (v_1, v_2, \dots, v_k)$ such that $\alpha = v_1$ and $\beta = v_k$.
    We need to prove that $H$ contains a \Hpath[2] from $x$ to~$y$.
    To achieve this, we will construct a \Hpath[2] $P'$ in $H$ by starting from the node-vertices corresponding
    to the vertices in $P$ and then adding the vertices of $D$ and the edge-vertices. We start by initializing $P' = (v_1', \dots, v_k')$.
    We next add the vertices of $D$, which will allow us to reach an edge-vertex: $P' = (v_1', \dots, v_k', d_2, d_3, d_1)$.

    To complete the definition of~$P'$, we will now sort the edge-vertices
    into sets depending on their latest adjacent node-vertex according to $P$.
    We will then add all those sets in reverse order of their latest adjacent node-vertex in $P$.
    This is possible because the line graph of a Hamiltonian graph is also Hamiltonian~\cite{A_Textbook_of_G_Balakr_2012},
    and because the graph that connects two edge-vertices if they share a common node-vertex is exactly the line graph of $G$.
    This process is a stronger version of the one proved in~\cite[Theorem~6.5.4]{A_Textbook_of_G_Balakr_2012} because it further
    allows us to choose our endpoints in the line graph.

    Formally, we first create the sets $P_{i}$ for $i \in \set{3,\dots,k}$.
    Then, because $P$ is Hamiltonian, it must contain all node-vertices.
    This means that every edge-vertex is of the form $w_{v_i v_j}$ where $v_i, v_j \in P$.
    Thus, for each edge-vertex $w_{v_i v_j}$ except for $y$ and $w_{v_1 v_2}$, we add it to the set $P_{\max(i,j)}$.
    Now, all the $P_i$ are disjoint and their union contains all edge-vertices of $H$ except $y$ and $w_{v_1 v_2}$.
    Furthermore, notice that for all $i \in \set{3, \dots, k}$, $w_{v_{i-1} v_i} \in P_i$.

    We can now add those sets of edge-vertices to $P'$ in this order:
    For each $i$ from $k$ down to $3$, add the vertices of $P_i$ to $P'$ by adding $w_{v_{i-1}v_i}$ last.
    Then, if $y \neq w_{v_1 v_2}$, add $w_{v_1 v_2}$ and then add $y$ to $P'$.
    Else, add $y$ to $P'$.

    The reader can then easily check that $P'$ is a \Hpath[2] in $H$ from
    $x$ to~$y$.

    We now show the backward direction of (*).
    Suppose that $H$ contains a \Hpath[2] $Q$ starting at $x$ and ending at $y$.
    As we have seen before, the neighbors of $d_2$ and $d_3$ in $Q$ can only be $d_2$ (for $d_3$),
    $d_3$ (for $d_2$) and $d_1$ and $\beta'$ (for both).
Thus, $d_2$ and $d_3$ must be consecutive in $Q$, otherwise the first one
    will be preceded and followed by $d_1$ and $\beta'$ in some order, and we
    will not be able to have neighbors for the second one.
    We also know that $d_2$ and $d_3$ are preceded by one of $d_1$ and $\beta'$,
    and followed by the other one.
Let $Q_D$ be the contiguous subsequence of~$D$ containing the four vertices
    $\{d_1, d_2, d_3, \beta'\}$.
    From our study of walks of length~$2$ in~$H$, we know that to
    go from a node-vertex to an edge-vertex, one must go through $Q_D$.
    Now, because $x$ is a node-vertex, there can only be node-vertices until
    $Q_D$, and because we cannot reach $d_1$ directly from a node-vertex, we
    know that $Q_D$ in fact must start with $\beta'$, so it must end with $d_1$.
    Similarly, because $y$ is an edge-vertex, there can only be edge-vertices between $d_1$ and $y$.
    Thus, letting $Q'$ be the prefix of~$Q$ until $\beta'$ (i.e., until
    the first vertex of~$Q_D$), we know that
    $Q'$ must visit only node-vertices, and it must visit all of them because
    afterwards $Q$ can only visit edge-vertices. Thus, $Q'$ starts with $x =
    \alpha'$, it ends with $y = \beta'$, it contains precisely all node-vertices, and adjacent
    pairs in~$Q'$ are connected by a walk of length~$2$ in~$H$ hence (from our
    study of length-$2$ walks in~$H$) the corresponding vertices in~$G$ are
    adjacent. Thus, taking the vertices of~$G$ in the order of the
    corresponding vertices in~$Q'$, we have obtained
    the desired \Hpath[1] from~$\alpha$ to~$\beta$ in~$G$.
This completes the reduction and concludes the proof.
\end{proof}

We now show the hardness of the \Hcycle[2] problem by reducing from the \Hpath[2]
with specified endpoints:

\twohcycle*

\begin{proofsketch}
    Let $G$ be an input graph and $\alpha, \beta \in V(G)$ be two different vertices.
    The reduction is to attach the gadget $D$ from \cref{fig:gadget2hcycle} to a copy of $G$
    by adding the edges $\edge{\alpha, d_0}$ and $\edge{\beta, d_1}$.
    We then prove that $G$ has a \Hpath[2] from $\alpha$ to $\beta$ if and only if the new graph $H$ has a \Hcycle[2].
    This is done by observing that the vertices $d_3, d_5, d_7$ of the gadget $D$ have very constrained neighborhoods in any \Hcycle[2] of $H$,
    which forces a specific ordering of the vertices of $D$ in such a cycle.
\end{proofsketch}

    \begin{figure}[ht]
    \centering
        \begin{tikzpicture}
            \node[circle, draw] (0) at (0,0) {$d_0$};
            \node[circle, draw] (1) at (4,0) {$d_1$};
            \node[circle, draw] (2) at (2,0) {$d_2$};
            \node[circle, draw] (3) at (2,1) {$d_3$};
            \node[circle, draw] (4) at (0,2) {$d_4$};
            \node[circle, draw] (5) at (-1,2) {$d_5$};
            \node[circle, draw] (6) at (4,2) {$d_6$};
            \node[circle, draw] (7) at (5,2) {$d_7$};

            \draw (0) -- (2);
            \draw (1) -- (2);
            \draw (2) -- (3);
            \draw (0) -- (4);
            \draw (4) -- (5);
            \draw (4) -- (6);
            \draw (1) -- (6);
            \draw (7) -- (6);
        \end{tikzpicture}
        \caption{Gadget $D$ for the proof of \cref{thm:2hcycle}, found with the help of a computer program.}\label{fig:gadget2hcycle}
    \end{figure}

\begin{toappendix}
    We now provide the full proof of the following result.

    \twohcycle*

\begin{proof}
    We reduce from the \Hpath[2] problem with specified endpoints,
    which we just proved to be NP-hard in \cref{thm:2sehpath}.
    Let $D$ be the specific gadget given in \cref{fig:gadget2hcycle}.

    Given an input graph $G$ and endpoint vertices $\alpha \neq \beta$ of~$G$,
    Let $H$ be the graph obtained by copying $G$ and connecting it to $D$
    by adding the two edges $\edge{\alpha, d_0}$ and $\edge{\beta, d_1}$.
    The construction of~$H$ from~$G$ can obviously be done in polynomial time, and it
    suffices to show that the reduction is correct by proving the following
    claim (*): $G$ has a \Hpath[2] from $\alpha$ to $\beta$ if and only if $H$
    has a \Hcycle[2].

    We first prove the forward direction of (*).
    Suppose $G$ has a \Hpath[2] $P$ starting at $\alpha$ and ending at $\beta$.
    Then one can check that $P' \coloneq P
    (d_6,d_5,d_0,d_3,d_1,d_7,d_4,d_2)$
is a \Hcycle[2] of $H$.

    We now prove the backward direction of~(*).
    Suppose $H$ has a \Hcycle[2] $C$. 
We will examine the vertices of~$C$, and
    talk of \emph{neighbors} to mean pairs of vertices that occur consecutively
    in~$C$ (including possibly as the first and last vertices)-- note that
    these vertices are then connected by a walk of length~$2$ but they are not
    necessarily neighbors in~$H$.
    We observe that the only vertices connected to $d_3$ by a walk of length~$2$
    are $d_0$ and $d_1$. Thus, these must be its two neighbors in $C$.
    For similar reasons,
    $d_5$ must have $d_0$ and $d_6$ as its neighbors in $C$ and $d_7$ must have $d_1$ and $d_4$ as its neighbors in $C$.
    Because every vertex must have exactly $2$ neighbors in $C$, 
    we deduce that $d_0$ must be next to both $d_3$ and $d_5$ while $d_1$ must be next to both $d_3$ and $d_7$.
    This means that, up to reversing and/or conjugating the cycle, we know that
    the sequence of vertices $S = (d_6,d_5,d_0,d_3,d_1,d_7,d_4)$ must occur
    consecutively in $C$.
    Furthermore, $d_2$ must be right before or right after this subsequence. If it were not, the previous and next vertices
    of the sequence would be $\alpha$ and $\beta$ and we would either go on $d_2$ and be stuck on it, or go to different vertices and have no way to ever come back to $d_2$.
    Assume that $d_2$ is right before $S$, the other case is analogous.
    Then the only vertex connected to~$d_4$ by a walk of length~$2$ that is not
    in the sequence is $\alpha$, and we deduce that $\alpha$ is the vertex right after $S$.
    The same holds for $d_2$ and $\beta$, thus $\beta$ is the vertex before $d_2$.
    By removing every vertex of $D$ we are then left with a \Hpath[2] of $H$
    that starts at $\alpha$ and ends at $\beta$, 
    which is also a \Hpath[2] of $G$ that starts at $\alpha$ and ends at $\beta$.
    This is what we wanted to obtain, so the backward direction of (*) is
    established, which concludes the proof.
\end{proof}
\end{toappendix}

 \subsection{The Case \texorpdfstring{$\bm{S = \{2,4\}}$}{S = \{2,4\}}}\label{sec:24}
\begin{toappendix}
  \subsection{Proofs for \texorpdfstring{$\bm{S = \{2,4\}}$}{S = \{2,4\}}}
  \label{apx:24}
\end{toappendix}

We now turn to the case of the \Hcycle[2,4] problem. We show NP-hardness via a
Cook reduction from the \Hcycle[1,2] problem; formally:

\twofourhcycle*

Our hardness proof uses Cook reductions for their ability to reduce an instance
to multiple instances with multiple oracle calls.
While we suspect that \Hcycle[2,4] should also be NP-hard under the more
standard notion of Karp reductions, we have not been able to show this.

\cref{thm:24hcycle} will directly follow from the following result.

\begin{propositionrep}\label{prop:24hcycleconstruction}
  Given an input graph $G$ and two vertices $x \neq y$ at distance at most~$2$ in~$G$,
  we can construct in polynomial time a graph $H_{x,y}$
    such that:
    \begin{itemize}
        \item if $G$ has a \Hcycle[1,2] where $x$ and $y$ are consecutive,
          then $H_{x,y}$ has a \Hcycle[2,4],
        \item if $H_{x,y}$ has a \Hcycle[2,4], then $G$ has a \Hcycle[1,2].
    \end{itemize}
\end{propositionrep}

\begin{proof}
    Let $G$ be a graph, and $x \neq y$ two vertices of $G$ at distance at most~$2$.
    Let $z$ be a neighbor of~$x$ towards $y$ (defined formally below).
    Let us now build the graph $H_{x,y}$ by taking the incidence graph of $G$
    and connect each of $x$ and $z$ respectively to the vertices~$d_1$ and~\(d_2\) of the gadget $D$
    shown in \cref{fig:gadget24hcycle}.

    Formally, we construct $H_{x,y}$ as follows:
    \begin{itemize}
        \item Build the incidence graph $I(G)$ by doing the following:
        \begin{itemize}
        \item For each vertex $v \in V(G)$, add a corresponding vertex $v' \in
          I(G)$. We call these \emph{node-vertices}.
        \item For each edge $\edge{u,v} \in E(G)$, add a new vertex $w_{uv}$ to
          $I(G)$ and connect it to both $u'$ and $v'$.
          We call these \emph{edge-vertices}.
        \end{itemize}
        \item Let $H_{x,y}$ be $I(G)$ together with a copy of the gadget $D$
        from \cref{fig:gadget24hcycle}.
        Define $z$ this way: if $x$ and $y$ are adjacent then $z = y$,
        else let $z$ be any common neighbor of $x$ and $y$.
        Now make $H_{x,y}$ connected by adding two new edges
        \edge{x', d_1} and \edge{z', d_2}.
    \end{itemize}

    We now prove that this construction satisfies the two required properties,
    starting with the first item of the proposition statement.
    Let $C$ be a \Hcycle[1,2] of~$G$ where $x$ and $y$ are consecutive, 
    and let us show that $H_{x,y}$ has a \Hcycle[2,4].
    Up to conjugating and reversing~$C$, we can suppose without loss of generality that $C$
    starts at $x$ and that $y$ immediately follows,
    so let us write $C = (x, y, v_1, v_2, \dots, v_k)$.
    Let us build the \Hcycle[2,4] of~$H_{x,y}$ by initializing $C'$ with $x'$
    and adding enough vertices of $D$ to be able to reach an edge-vertex,
    namely: $C' = (x', d_5, d_4)$.
    Here, the intermediate walks are $x' - d_1 - d_3 - d_4 - d_5$ and $d_5
    - d_7 - d_6 - d_5 - d_4$, both of which have length $4$.

    Now, consider the graph $G'$ that is a copy of $G$ where we add two vertices $d_3'$ and $d_5'$ and the three edges \edge{x, d_3'}, \edge{z, d_3'}, and \edge{d_3', d_5'}.
    Observe that the edges of~$G'$ precisely correspond to the edge-vertices
    of~$H_{x,y}$ plus the three vertices $d_1$, $d_2$, and $d_4$.
    By construction $G'$ has at least $3$ edges,
    and further the edges \edge{d_3', d_5'} and \edge{z, d_3'} are connected by a walk of
    length~$2$ in the line graph~$L(G')$ of~$G'$, as witnessed by the walk 
    $\edge{d_3', d_5'} - \edge{x, d_3'} - \edge{z, d_3'}$.
    By \cref{lem:12pathlinetree}, there is a \Hcycle[1,2] $C''$ in $L(G')$ where \edge{d_3', d_5'} and \edge{z, d_3'} are consecutive.
    We can transform~$C''$ into a simple $\{2,4\}$-path $P''$ of $H_{x,y}$ that
    starts at $d_4$, visits all the edge-vertices of $H_{x,y}$ and $d_1$,
    and ends on $d_2$, by replacing each vertex of~$L(G')$ by the corresponding edge-vertex
    in~$H_{x,y}$ and replacing \edge{d_3', d_5'},  \edge{x, d_3'}, and \edge{z, d_3'},
    by $d_4$, $d_1$, and $d_2$, respectively.
    We append $P''$ to~$C'$ (without the first vertex~$d_4$ as it is already
    in~$C'$).
    Then from there, we go to $d_6$ through $d_2 - d_3 - d_4 - d_5 - d_6$, then we go to $d_7$ through $d_6 - d_5 - d_7$,
    and finally we go to $d_3$ through $d_7 - d_6 - d_5 - d_4 - d_3$.

    We can now complete the construction of $C'$ by concatenating our current $C'$
    with $C_{end} = (y', v_1', v_2', \dots, v_k')$,
    where we go to $y'$ through $d_3 - d_2 - y'$ if we took $z = y$
    (a walk of length~$2$),
    or otherwise through $d_3 - d_2 - z' - w_{zy} - y'$ (a walk of
    length~$4$).
    Then, all consecutive vertices in $C_{end}$ are connected by a walk of
    length $2$ or $4$ because they were connected by a walk of length $1$ or $2$ in $G$ by
    definition of~$C$ being a \Hcycle[1,2] of~$G$.
    Last, there is a walk of length $2$ or $4$ from $v_k'$ to the first vertex
    $x'$ of~$C'$ for the same reason.
    Thus, $C'$ is indeed a \Hcycle[2,4] of $H_{x,y}$.

    \medskip

    We now show the second item of the proposition statement:
    let $C'$ be a \Hcycle[2,4] of~$H_{x,y}$,
    and let us prove that $G$ has a \Hcycle[1,2].
    Up to conjugating~$C'$, we can assume without loss of generality that $C'$ starts at a node-vertex $v$.
    Because $C'$ must at some point visit an edge-vertex and come back to $v$,
    it means that $C'$ must switch between node-vertices and edge-vertices at
    least twice (note that if $C'$ starts with~$v$ and ends with an edge-vertex
    then this is also counted as a switch, namely, a switch between the last
    vertex and the first vertex $v$ as we close the cycle).
    Now, if we remove the vertices of $D$ from $H_{x,y}$, we are left with a
    graph that is bipartite between node-vertices and edge-vertices. Thus,
    as long as no vertex of $D$ is visited (in~$C'$ or as an intermediate vertex
    of a walk in~$C'$), and because we are only allowed walks of even length, we
    know that $C'$ must visit either only node-vertices or only edge-vertices
    between visits to vertices of~$D$ (as vertices of~$C'$ or as intermediate
    vertices). 
    More precisely, because the only way to switch between node-vertices and
    edge-vertices is to traverse the triangle $\{d_5,d_6,d_7\}$ (without which
    $H_{x,y}$ would be bipartite), and because the maximal walk length $4$ is less than the distance
    between the node-vertices and the vertices of this triangle, we even know
    that $C'$ can only change between node-vertices and edge-vertices by
    a visit to at least one vertex of~$D$ which occurs in~$C'$ (i.e., not just
    as an intermediate vertex). We will now show that we can isolate in $C'$ a
    contiguous sequence $N$ which visits all node-vertices.

    To justify this claim, it is easy to see by construction that $D$ only has two vertices
    which may occur in~$C'$ consecutively to a node-vertex,
    namely, the vertices $d_3$ and
    $d_5$, which are the only vertices of~$D$ connected by a walk of length~$2$
    or~$4$ to a node-vertex. 
    Thus, consider the first time that $C'$ visits a vertex of~$D$. As $C'$ was
    visiting node-vertices until then, the first vertex of~$D$ that occurs
    in~$C'$ is $d_3$ or~$d_5$. We argue that the other vertex of $d_3, d_5$
    cannot occur before $C'$ exits~$D$. Indeed, assuming by contradiction that
    it does, there are three possibilities. The first case is that the first
    vertex of~$C'$ after vertices of~$D$ is a node-vertex, and then we can never
    re-enter $D$ and so we can never visit the edge-vertices, a contradiction.
    The second case is that the first vertex of~$C'$ after vertices of~$D$ is an
    edge-vertex, but then we need to re-enter $D$ to switch back to
    node-vertices and we will have no way to exit~$D$ to a node-vertex because
    both $d_3$ and $d_5$ are taken. The third case is that $C'$ ends after this
    visit to~$D$, but this is impossible because the edge-vertices were not
    visited. Hence, in all three cases we know that $C'$ exits $D$ without
    visiting the other vertex from~$d_3,d_5$. In particular, $C'$ exits $D$ to
    an edge-vertex.
    We then know that the other vertex from~$d_3,d_5$ must be used to exit $D$
    when $C'$ will switch back to the node-vertices. This reasoning implies that
    $C'$ cannot switch between node-vertices and edge-vertices more than two times.
    In conclusion, up to conjugating $C'$, we can write it as $C' = N D_1 E D_2$
    where $N$ consists precisely of all the node-vertices, where
    $D_1$ and $D_2$ are only composed of vertices from $D$, and where $E$ 
    contains of all the edge-vertices (note that we have not ruled out that
    vertices from $D$ may appear in~$E$, but this does not matter).

    We now focus our attention on~$N$ and show that it yields a \Hcycle[1,2]
    in~$G$. To see why, observe that
    consecutive node-vertices in $C$ must correspond to walks of length $2$
    or $4$ in $H_{x,y}$. These
    correspond to walks of length $1$ or $2$ in $G$,
    or to walks that go via $D$ but the only such walk between node-vertices
    in~$H_{x,y}$ is the walk of length~$4$ connecting $x'$ and $z'$ for which the
    corresponding vertices $x$ and $z$ were adjacent in~$G$.

    Thus, if we take the original vertices of $G$ corresponding to the node-vertices
    in the order they appear in $N$, we get a \Hpath[1,2] in $G$, from the
    original vertex $a$ of the first
    vertex right after $D_2$ in~$C'$ (i.e., the first vertex of $N$), denoted~$a'$, to
    the original vertex $b$ of the last vertex right before $D_1$ in~$C'$, denoted~$b'$.
    To complete the proof, we need to show that this path is a cycle, i.e., that 
    $a$ and $b$ are at connected by a walk of length $1$ or $2$ in $G$.
    Let us analyze which possible node-vertices $a'$ and $b'$ can be.
    There are two types of node-vertices that are connected to~$D$ by walks of
    length $2$ or $4$:
    \begin{itemize}
        \item $x'$ and $z'$ can reach either $d_3$ or $d_5$.
        \item Any node-vertex different from $x'$ or $z'$ whose original in~$G$ is adjacent
              to $x$ or $z$ can reach $D$ only on $d_3$.
    \end{itemize}
    Recall now from our previous reasoning that $D_1$ must start with one of
    $d_3,d_5$ and $D_2$ must end with the other vertex among $d_3,d_5$. Thus,
    one of $a'$ and $b'$ occurs consecutively to~$d_5$, so one of $a'$ and $b'$ must
    be $x'$ or $z'$.
    The other vertex must then be either the other one of $x'$ and $z'$
    or a node-vertex whose original in~$G$ is adjacent to
    $x$ or $z$. Recall now that $z$ was picked to be adjacent to~$x$ in~$G$, so
    in all the possibilities considered we know that $a$ and $b$ in~$G$ must be
    at a distance of at most~$2$.

    Thus $G$ contains a \Hcycle[1,2], which concludes the proof.
\end{proof}

Note that, when $H_{x,y}$ has a \Hcycle[2,4], then the \Hcycle[1,2] in~$G$ is
not necessarily one where $x$ and $y$ are consecutive. Nevertheless, the result
of \cref{prop:24hcycleconstruction} suffices to design a Cook reduction that
shows the NP-hardness of the \Hcycle[2,4] problem:

\begin{proof}[Proof of \cref{thm:24hcycle}]
  We do a Cook reduction from the \Hcycle[1,2] problem, which as we already mentioned in \cref{prop:1hcycle} is
NP-hard. Given an input $G$ to the \Hcycle[1,2] problem, assuming without loss
of generality that it has at least two vertices,
we consider every pair $x,y$ of
vertices at distance at most~$2$ in $G$ 
and construct the corresponding graph $H_{x,y}$: this amounts to polynomially
many graphs which are all built in polynomial time. Now, if $G$ has a
\Hcycle[1,2] $C$ then for any choice of contiguous vertices $x,y$ of~$C$, we
know that $H_{x,y}$ has a \Hcycle[2,4]. Conversely, if some $H_{x,y}$ has a
\Hcycle[2,4] then $G$ has a \Hcycle[1,2]. This implies that $G$ has a
\Hcycle[1,2] if and only if some $H_{x,y}$ has a \Hcycle[2,4], so we can indeed
solve \Hcycle[1,2] in polynomial time with access to an oracle for
\Hcycle[2,4]. This establishes the Cook reduction and concludes the proof.
\end{proof}

Thus, all that is left to show \cref{thm:24hcycle} is to
prove \cref{prop:24hcycleconstruction}. To do so, we will need the following
intermediate lemma which is a stronger version of~\cite[Theorem 2]{On_the_line_gra_Nebesk_1973}.
This lemma originally proves that the square of a line graph is always Hamiltonian, i.e., that the line graph of any (connected) graph always has a \Hcycle[1,2],
and we need a slightly stronger version that allows us to choose two vertices of the line graph to be consecutive in the \Hcycle[1,2].
\begin{toappendix}
We now give the full proof of the following lemma, which adapts the proof of \cite[Theorem 2]{On_the_line_gra_Nebesk_1973}.
\end{toappendix}
\begin{lemmarep}[Strengthening of Theorem~2 of \cite{On_the_line_gra_Nebesk_1973}]
  \label{lem:12pathlinetree}
    Let $G$ be a graph with at least $3$ edges.
    Let $L(G)$ be the line graph of~$G$. Let $\alpha$ and $\beta$ be
    two vertices of $L(G)$ connected by a walk of length $2$ 
    in $L(G)$. Then there exists a \Hcycle[1,2] in $L(G)$ where $\alpha$ and
    $\beta$ are consecutive. 
\end{lemmarep}

\begin{proof}
    Let $G$ be a graph with at least $3$ edges and $L(G)$ be its line graph.
    Let $\alpha$ and $\beta$ be two vertices of $L(G)$ connected by a walk of length $2$ in $L(G)$.
    We first show that it is sufficient to prove the claim for trees:
    Consider a spanning tree $T_1$ of $G$.
    Color the edges of $T_1$ in blue in $G$.
    Subdivide each uncolored edge of $G$ by adding a new vertex in the middle of
    the edge, and color one of the two new edges in red and the other in blue.
    Let $T_2$ be the graph composed of all the blue edges.
    Obviously, $T_2$ is a tree with at least $3$ edges.
    It is easy to see that $L(T_2)$ is isomorphic to a spanning subgraph of $L(G)$.
    This means that we just need to prove that for any tree \(T\) with at least
    three edges, and for any vertex \(a\) and \(b\) of \(L(T)\) such that
    \(a\) and \(b\) are at walk distance \(2\) in \(L(T)\),
    there exists a \Hcycle[1,2] in \(L(T)\) where \(a\) and \(b\) are consecutive.
    Indeed, this would imply that $L(G)$ also has such a \Hcycle[1,2] simply by
    taking the \Hcycle[1,2] of $L(T_2)$ and following the same sequence of vertices.
    Thus, we can now focus on proving the claim for trees.

    Let $T$ be a tree with at least $3$ edges.
    Let $L(T)$ be the line graph of~$T$.
    Let $\alpha$ and $\beta$ be two vertices of $L(T)$ such that there exists a walk of length $2$ between them in $L(T)$.
    We will show that there exists a \Hcycle[1,2] in $L(T)$ where $\alpha$ and
    $\beta$ are consecutive.
    We prove the claim by induction on the number of edges of $T$. 
    The base case is when $T$ has $3$ edges,
    and then there are two possibilities. First, if $T$ is a path with 4 vertices, then $L(T)$
    is a path with 3 vertices and the claim is immediate. Second, if $T$ is a
    star with 3 branches, then $L(T)$ is the complete graph on 3 vertices, and
    the claim is also immediate.
    For the induction case, fix $n > 3$, suppose the claim holds for any tree
    with strictly fewer than $n$ edges, and let $T$ be a tree with $n$ edges. We
    will show that the claim is true on~$T$.
    Let $L(T)$ be the line graph of~$T$ and let $\alpha$ and $\beta$ be vertices of $L(T)$ such that there exists a walk of length $2$ between them in $L(T)$.

    There are two cases: either $T$ is a path, or $T$ is not a path.
    First, if $T$ is a path, then $L(T)$ is the path with $n-1$ vertices $(v_1, \dots, v_{n-1})$, and
    we can take the \Hcycle[1,2] $C = (v_1, v_3, \dots, v_{n-1}, v_{n-2}, \dots,
    v_2)$ if $n-1$ is odd, and the same with $v_{n-1}$ and $v_{n-2}$ swapped if
    $n$ is even. This choice of~$C$ ensures in fact that every pair of vertices at distance
    $2$ in $L(T)$ are consecutive in $C$, hence this is true in particular of~$\alpha$ and
    $\beta$.
    
    Second, if $T$ is not a path, then $T$ has at least $3$ leaves.
    We connect these leaves to a vertex of degree $>2$ by iteratively
    taking the unique neighbor of the current vertex until we reach such a vertex.
    (The vertices of degree $>2$ that we reach in this fashion may or may
    not be distinct.)
    The three paths constructed in this way do not share any edges, so let us pick one that contains
    neither the edge corresponding to $\alpha$ nor the edge corresponding to $\beta$.
    Formally, letting $v_k$ be the chosen leaf and $v_0$ the vertex of degree at
    least~$3$ that we reach, 
    we obtain by this argument a simple path $v_0, \ldots, v_k$ in~$T$
    where $v_0$ has degree $>2$,
    all vertices $v_i$ with $0 < i < k$ have degree~2,
    $v_k$ has degree~$1$,
    and the edges of that path are all different from $\alpha$ and $\beta$.

    Let us consider the tree $T_0$ obtained from $T$ by deleting the vertices
    $v_1,\dots, v_k$ and all edges incident to them, which we then root at~$v_0$.
    By $u_1,\dots, u_i$ we denote the children of $v_0$ in
    $T_0$; as $v_0$ had degree $>2$ in~$T$, we know that $i \geq 2$.
    Now, because $T_0$ has at most $n-1$ edges, and $L(T_0)$ contains $\alpha$ and $\beta$,
    we can use the induction hypothesis on~$T_0$ to
    obtain a \Hcycle[1,2] $C$ in $L(T_0)$ where $\alpha$ and $\beta$ are
    consecutive. We will now modify $C$ to be a \Hcycle[1,2] of~$L(T)$ where $\alpha$
    and $\beta$ are still consecutive, by inserting back the edges of the path
    $v_0, \ldots, v_k$ which we had removed.

    Let $x$ and $y$ be two vertices of $L(T_0)$ such that $x$ and $y$ are consecutive in $C$ (also possibly as first and
    last vertices),
    such that $y$ is incident to $v_0$, i.e. $y = \edge{v_0, u_y}$ for some child $u_y$ of $v_0$ in $T_0$,
    and such that the edge of $T_0$ corresponding to $x$ is incident to one of the $u_i$ in $T_0$.
    Such a pair $(x,y)$ must exist because we will at some point need to go from the subtree of $T_0$
    rooted at $u_i$ to the subtree rooted at $u_j$ for some $i \neq
    j$. (Note that here we use the fact that $v_0$ has degree $\geq 2$ in~$T_0$.)
    There are three ways to achieve this with two consecutive vertices of
    $L(T_0)$ at distance $\leq 2$:
    either by going from \edge{v_0, u_i} to \edge{v_0, u_j}, 
    or by going from \edge{u_i, w_i} to \edge{v_0, u_j} 
    (where $w_i$ is a neighbor of~$u_i$ which is different from~$v_0$)
    or by going from \edge{v_0, u_i} to \edge{u_j, w_j}
    (where $w_j$ is a neighbor of~$u_j$ which is different from~$v_0$).
    In any case, we can pick $x$ and $y$ accordingly.

    Now, we want to insert the missing vertices between $x$ and $y$. However, if $\{ x,y \} = \{ \alpha, \beta \}$,
    this will not work, as we want to keep them consecutive.
    To this end, we will show how to modify $x$ and $y$ to ensure that at most one of them is in $\{\alpha, \beta\}$,
    such that they still respect the properties listed above.
    If $\{ x,y \} = \{ \alpha, \beta \}$, the other vertex consecutive to $y$ in $C$ is either corresponding to an edge incident to a $u_l$
    (which is forced if $u_y$ is a leaf in $T_0$),
    or it is deeper in subtree rooted at $u_y$.
    In the first case, this other neighbor of $y$ can be
    called $x$ in the remainder of the proof.
    In the other case, the only way to exit the subtree rooted at $u_y$ is to go from an edge incident to $u_y$ that is not $y$ towards an edge incident to $v_0$,
    and they can become the new $x$ and $y$ respectively.
    
    We now have $x$ and $y$ such that at most one of them is in~$\{\alpha,
    \beta\}$, such that $y$ is incident to $v_0$,
    and such that $x$ is incident to a $u_i$.
    By $P$ we denote the $\set{1,2}$-path in $L(T)$ such that if $k = 1$, then
    $P = (x, \edge{v_0,v_1}, y)$,
    and if $k \geq 2$, then $P = (x, \edge{v_0,v_1}, \edge{v_2,v_3},
    \dots, \allowbreak \edge{v_{g-3},v_{g-2}}, \allowbreak \edge{v_{g-1},v_g},
    \allowbreak \edge{v_h, v_{h-1}}, \dots, \edge{v_2,v_1}, y)$,
    where $g$ is the greatest odd integer not exceeding $k$ and $h$ is the greatest even integer not exceeding $k$.
    If in $C$ we replace \edge{x, y} by $P$, we obtain a \Hcycle[1,2] in $L(T)$ and because we insert nothing between $\alpha$ and $\beta$, they are still consecutive in this resulting cycle.
    Hence, we have shown that there is a \Hcycle[1,2] in
    $L(T)$ where $\alpha$ and $\beta$ are consecutive, which concludes the proof
  of the induction hypothesis. Thus, we have shown the desired claim by
induction, which concludes the proof.
\end{proof}

With this lemma in hand, we are now ready to prove \cref{prop:24hcycleconstruction}.

\begin{figure}
    \centering
    \begin{tikzpicture}
        \node[circle, draw] (d1) at (-1.44,1) {$d_1$};
        \node[circle, draw] (d2) at (-1.44,-1) {$d_2$};
        \node[circle, draw] (d3) at (0,0) {$d_3$};
        \node[circle, draw] (d4) at (2,0) {$d_4$};
        \node[circle, draw] (d5) at (4,0) {$d_5$};
        \node[circle, draw] (d6) at (6,0) {$d_6$};
        \node[circle, draw] (d7) at (5,1.44) {$d_7$};

        \draw (d1) -- (d3);
        \draw (d2) -- (d3);
        \draw (d3) -- (d4);
        \draw (d4) -- (d5);
        \draw (d6) -- (d5);
        \draw (d7) -- (d6);
        \draw (d7) -- (d5);
    \end{tikzpicture}
    \caption{Gadget $D$ for the proof of \cref{prop:24hcycleconstruction}}\label{fig:gadget24hcycle}
\end{figure}

\begin{proof}[Proof sketch of \cref{prop:24hcycleconstruction}]
  Given an input graph $G$ on which we want to determine the existence of a
  \Hcycle[1,2] and two of its vertices \(x\) and \(y\), we build $H_{x,y}$ by taking the incidence graph of~$G$ (formed
  of node-vertices and edge-vertices) and
  attaching the gadget from \cref{fig:gadget24hcycle} with edges
  from \(d_1\) and \(d_2\) respectively to the vertices
  $x'$ and $z'$ corresponding to~$x$ and to a vertex $z$ which is a neighbor
  of~$x$ towards $y$ (specifically, we take $z=y$ if $x$ and $y$ are at distance $1$, and we take $z$ to be a common neighbor of~$x$ and~$y$ if they are at distance $2$).

  For the forward direction, we show that a \Hcycle[1,2] $C$ of~$G$ where $x$ and
  $y$ are adjacent can be lifted to a \Hcycle[2,4] $C'$ of~$H_{x,y}$. At a high
  level, $C$ shows us how to visit the node-vertices of~$H_{x,y}$,
  and we use $D$ to toggle to the edge-vertices and back. Further, we use
  \cref{lem:12pathlinetree} to build the portion of the cycle that visits the
  edge-vertices of~$H_{x,y}$ while starting and ending at appropriate vertices.

  For the backward direction, given a \Hcycle[2,4] $C'$ of~$H_{x,y}$, we use the fact
  that $H_{x,y}$ is almost bipartite except for the triangle $\{d_5,d_6,d_7\}$
  in~$D$. This allows us to show that $C'$ can only switch one time from
  node-vertices to edge-vertices and one time back, so that $C'$ must contain a
  contiguous \Hpath[2,4] $N$ that visits all the node-vertices, which is preceded
  and succeeded by visits to~$D$. Further
  considerations on the structure of~$H_{x,y}$ ensure that $N$ must in fact be 
  a \Hcycle[1,2] of~$G$, i.e., its starting and ending vertices 
  correspond to vertices at a distance of at most $2$ in~$G$.
\end{proof}

\section{Trivial Case: \safebm{$S = \{1,2,4\}$}}\label{sec:trivial}
In this section we show that every graph admits a \Hcycle[1,2,4] which we can construct in linear time.

\onetwofourhcycle*

Up to taking a spanning tree, this is equivalent to showing that every tree
admits a \Hcycle[1,2,4], which we do in the following lemma (with an additional property that is needed for its own proof).

\begin{lemmarep}\label{lem:124hcycletree}
Let $T$ be a tree rooted at $r$ such that $|V(T)| \geq 2$. Then there exists a \Hcycle[1,2,4] 
$(r,v_1,\dots,v_k)$ of~$T$ such that $v_1$ is a child of~$r$.
Furthermore, such a cycle can be computed in linear time.
\end{lemmarep}

\begin{proofsketch}
    The idea of the proof is to do an induction on the number of vertices of $T$. 
    We distinguish the children of~$r$ in~$T$ between those children that are
    leaves and those children that are roots of non-trivial subtrees.
    If there are leaf children, then we can simply visit them in order with a
    1-hop from $r$ and 2-hops between each of them, easily ensuring the
    requirement that the second vertex of the cycle is a child of~$r$.
    For each non-leaf child $v_i$ of~$r$,
    we use the induction hypothesis to get a cycle $C_i$ which visits the
    subtree of~$T$ rooted at~$v_i$ while starting at~$v_i$ and visiting a child of~$v_i$
    immediately afterwards. Up to conjugating these cycles and reversing 
    every
    other cycle, we can
    stitch these cycles together to visit the subtrees rooted at non-leaf
    children of~$r$:
    we start at the first such vertex~$v_1$
    (reached via a 1-hop from~$r$, or a 2-hop
    from the last leaf child of~$r$), follow~$C_1$ (after conjugation and
    reversal), and finish on a child $w_1$ of~$v_1$,
    then we do 4-hop to a child $w_2$ of~$v_2$ and do $C_2$ in reverse order 
    to finish on~$v_2$. At the end of this process, we finish either on the last
    non-leaf child of~$r$ or a child thereof, and we can close the cycle and go
    back to~$r$ with a 1-hop or a 2-hop.

    This inductive proof can be transformed into a linear-time
    algorithm that constructs the desired \Hcycle[1,2,4] of~$T$.
\end{proofsketch}

\begin{proof}
We proceed by induction on $|V(T)|$.

\textbf{Base case.}
The base case is when $|V(T)|=2$, in which case $T$ is the tree with one edge and the claim is trivial.

\textbf{Induction step.}
Let $n \in \NN, n \ge 3$, and assume the statement holds for all trees with at most $n-1$ vertices.  
Let $T$ be a tree rooted at $r$ with $n$ vertices.

Partition the children of $r$ into leaf children $u_1,\dots,u_\ell$ and non-leaf children 
$v_1,\dots,v_m$, where \(\ell,m\ge 0\) and \(\ell+m \ge 1\)
because \(n \geq 3\). (In fact we even know that \(\ell+m \ge 2\), but we do not
need it.)
For each $i$, denote by $T_i$ the subtree of~$T$ rooted at $v_i$, i.e., the connected component of
$T\setminus\{r\}$ that contains $v_i$. Note that, as $v_i$ is not a leaf, we
must have $|V(T_i)|\ge 2$ for all $i$.

Because every $T_i$ has less than $n$ vertices, we know by induction hypothesis
that each $T_i$ admits a \Hcycle[1,2,4] $C_i$ starting at
$v_i$ and starting with a 1-hop to a child of $v_i$ in $T_i$, denoted $w_i$.

We will now construct the \Hcycle[1,2,4] of $T$ in two stages. First, we will visit all leaf children 
of $r$ (if some exist), and second we will visit all vertices in each subtree
$T_i$ one after the other (if some exist).

\textbf{Stage 1: The leaf children.}  
If $\ell>0$, let us traverse the leaves in order: start from $r$ and move to $u_1$ (walk of length $1$).
Then for $j = 1,\dots,\ell-1$ move from $u_j$ to $u_{j+1}$ via $u_j-r-u_{j+1}$ (walk of length $2$).
If $m=0$, we close the cycle with $u_\ell-r$ (walk of length $1$) and we
obtain a
\Hcycle[1,2,4] of $T$ which starts at a child of~$r$, so we are finished.
Otherwise, if $m>0$, continue to stage~2.

\textbf{Stage 2: All the $T_i$.}
First, go to $v_1$. If $\ell>0$, this is done via $u_\ell - r - v_1$ (walk of length $2$);
if $\ell=0$, this is done via $r - v_1$ (walk of length $1$).
Then traverse the \Hcycle[1,2,4] $C_1$ of $T_1$ after conjugation and reversal,
going from $v_1$ to $w_1$ while visiting all vertices of~$T_1$.
Next, traverse the other non-leaf children trees $T_2,\dots,T_m$ as follows:
\begin{enumerate}
    \item For $i = 2, \dots, m$, we alternate those two operations:
    \begin{enumerate}
        \item If $i$ is even, then we go to $w_i$ from $w_{i-1}$ through
          $w_{i-1} - v_{i-1} - r - v_i - w_i$, which is a walk of length~$4$.
          We traverse the cycle $C_i$ of
          $T_{i}$ in order (after conjugation) to go from $w_{i}$ to $v_{i}$
          while visiting all vertices of~$T_i$.
        \item If $i$ is odd, then we go to $v_i$ from $v_{i-1}$ through
          $v_{i-1} - r - v_i$, which is a walk of length~$2$.
          We traverse the cycle $C_i$ of $T_i$ (after conjugation) in reverse
          order, starting from $v_i$ and ending at $w_i$, while visiting all
          vertices of~$T_i$.
    \end{enumerate}
    This alternation continues until all subtrees have been visited.
    \smallskip
    \item Finally, after processing $T_{m}$, we return to $r$ from either $w_m$ or $v_m$ depending on the parity of $m$.
    If $m$ is odd, this is a walk of length $2$ through $w_m - v_m - r$;
    if $m$ is even, this is a walk of length $1$.
\end{enumerate}
The constructed sequence therefore is a \Hcycle[1,2,4] of $T$ starting at $r$
where the second vertex is a child of $r$, which establishes the inductive claim.

This construction can be implemented by a recursive algorithm which considers each vertex exactly once,
thus runs in linear time, which concludes the proof.
\end{proof}

\begin{toappendix}

We also provide the proof of the \cref{thm:124hcycle}, using the previous lemma.
\begin{proof}[Proof of \cref{thm:124hcycle}]
Let $G$ be a connected graph. If $|V(G)|=1$, then $G$ is a single vertex, say $v$, and $(v)$ is a \Hcycle[1,2,4] of $G$.
Otherwise, choose a spanning tree $T$ of $G$.
Using \cref{lem:124hcycletree}, we know that $T$ admits a \Hcycle[1,2,4] $(r,v_1,\dots,v_k)$ where $r$ is the root of $T$ and $v_1$ is a child of $r$ in $T$.
Because $T$ is a subgraph of $G$, $(r,v_1,\dots,v_k)$ is also a \Hcycle[1,2,4] of $G$.
\end{proof}
\end{toappendix}

\section{Polynomial Case: \safebm{$S = \{2,4,6\}$}}\label{sec:poly}
In this section, we show that the \Hcycle[2,4,6] problem can be solved in linear
time because the graphs with a \Hcycle[2,4,6] are precisely the non-bipartite
graphs:

\twofoursixhcycle*

To prove this theorem, we will first show that every non-bipartite graph admits a \Hcycle[2,4,6].

\begin{lemmarep}\label{lem:246hcyclenb}
    Given a non-bipartite graph $G$, we can construct a \Hcycle[2,4,6] of $G$ in linear time.
\end{lemmarep}

\begin{proofsketch}
    Let $G$ be a non-bipartite graph.
    Then $G$ must contain a simple cycle of odd length.
    Let $C = (v_0, v_1, \dots, v_{2k})$ be a shortest odd cycle in $G$.
    We construct a spanning forest rooted at each $v_i$ by performing a simultaneous breadth-first search starting from all~$v_i$.
    This way, we obtain a collection of trees $T_0, T_1, \dots, T_{2k}$, each rooted at their respective~$v_i$, and covering all vertices of $G$.
    We can use \cref{prop:123hcycle} about the triviality of the \Hpath[1,2,3] problem to create, in each $T_i$, a \set{1,2,3}-path of the tree obtained from taking only vertices at odd depth (resp., at even depth) in $T_i$,
    which translates to a \set{2,4,6}-path of those vertices in $T_i$.
    We traverse the cycle $C$ twice.
    During the first traversal, we visit the vertices $v_i$ with even index $i$, and during the second traversal, we visit the vertices $v_i$ with odd index $i$.
    When visiting a vertex $v_i$, we also enumerate all vertices of $T_i$ at even depth, and all vertices of $T_{i+1}$ at odd depth. (The indices are taken modulo $2k+1$.)
    This yields a \Hcycle[2,4,6] of $G$.
\end{proofsketch}

\newcommand{\Vie}[1]{\ensuremath{V_{#1}^{\mathrm{even}}}}
\newcommand{\Vio}[1]{\ensuremath{V_{#1}^{\mathrm{odd}}}}
\newcommand{\Pie}[1]{\ensuremath{P_{#1}^{\mathrm{even}}}}
\newcommand{\Pio}[1]{\ensuremath{P_{#1}^{\mathrm{odd}}}}

\begin{proof}
Let $G = (V,E)$ be a non-bipartite graph. Then $G$ must contain a cycle of odd length. 
Let $C = (v_0, v_1, \dots, v_{2k})$
be a shortest odd cycle in $G$, where $k \ge 1$.

\medskip

\textbf{Step 1: Constructing a spanning forest rooted at the cycle.}  
Consider the graph $G' := G \setminus C$ obtained from $G$ by removing all edges between vertices of $C$.
We construct a spanning forest $\{T_0, T_1, \dots, T_{2k}\}$ rooted at the cycle vertices as follows:

\begin{enumerate}
    \item Initialize each tree $T_i := \{v_i\}$.
    \item Perform a simultaneous breadth-first search starting from all $v_i$.  
    \item Whenever a vertex $u \in V \setminus C$ is reached for the first time from some root $v_i$, assign $u$ to $T_i$.  
    \item Continue until every vertex in $V$ is assigned to exactly one $T_i$.  
\end{enumerate}

By construction, each $T_i$ is connected, the trees are disjoint, and their union covers all vertices of $G$, forming a spanning forest rooted at the cycle vertices.

\medskip

\textbf{Step 2: Constructing Hamiltonian paths within each tree.}  
For $0 \leq i \leq 2k$ and $u$ a vertex of $T_i$, let $d_{T_i}(u)$ be the depth of $u$ in the tree $T_i$.  
Let $\Vie{i} := \{u \in T_i \mid d_{T_i}(u) \bmod 2 = 0\}$ be the set of vertices at even depth in $T_i$,
and let $\Vio{i} := \{u \in T_i \mid d_{T_i}(u) \bmod 2 = 1\}$ be the set of vertices at odd depth in $T_i$.  
We claim that we can construct a path in every $T_i$ that starts at any vertex of $\Vie{i}$ (resp. $\Vio{i}$),
ends on any other vertex of $\Vie{i}$ (resp. $\Vio{i}$) and visits all vertices of $\Vie{i}$ (resp. $\Vio{i}$) exactly once using only walks of length $2$, $4$, or $6$.
To achieve this, we first have to build the tree $T_i^{\mathrm{even}}$ (resp. $T_i^{\mathrm{odd}}$) by taking the vertices of $\Vie{i}$ (resp. $\Vio{i}$) and connecting them if they are at distance 2 in $T_i$.
Notice that $T_i^{\mathrm{even}}$ and $T_i^{\mathrm{odd}}$ are indeed trees, where the parent of each vertex is its
grandparent in~$T_i$.
Recall now that we saw in \cref{prop:123hcycle}
that any tree admits a \Hpath[1,2,3] between any 2 given vertices.
We can use this result to create a \Hpath[1,2,3] in $T_i^{\mathrm{even}}$ (resp. $T_i^{\mathrm{odd}}$) between any two of its vertices.
By taking the same vertices in the same order as the path we just constructed, we will end up with a simple $\{2,4,6\}$-path in $T_i$ between any 2 given vertices because every distance is just multiplied by 2.

By using this property, we define for each $i$ a simple $\{2,4,6\}$-path taking all the vertices of $\Vie{i}$ except for $v_i$, denoted $\Pie{i}$, this way:
If $\Vie{i} = \set{v_i}$, $\Pie{i}$ is empty.
Else, if $\Vie{i}$ contains a vertex at depth $4$ in $T_i$, $\Pie{i}$ starts at any vertex in $\Vie{i}$ at depth $4$ in $T_i$ and ends at any vertex at depth $2$ in $T_i$.
Finally, the last case is when $\Vie{i}$ contains only $v_i$ and vertices at depth $2$ in $T_i$, in which case we define $\Pie{i}$ as a path between any two vertices at depth $2$ in $T_i$.
Similarly, we can define for each $i$ a simple $\{2,4,6\}$-path $\Pio{i}$ that
takes all the vertices of $\Vio{i}$ in this way:
If $\Vio{i}$ is empty, $\Pio{i}$ is empty.
Else, if $\Vio{i}$ contains a vertex at depth $3$ in $T_i$, $\Pio{i}$ starts at any vertex in $\Vio{i}$ at depth $1$ in $T_i$ and ends at any vertex at depth $3$ in $T_i$.
Finally, the only other case is when $\Vio{i}$ only contains vertices at
depth $1$ in $T_i$, and then $\Pio{i}$ is a path between any two vertices at depth $1$ in $T_i$.

\medskip

\textbf{Step 3: Creating the cycle.}
We now construct the \Hcycle[2,4,6] by interleaving visits to the cycle $C$ and the Hamiltonian paths that we constructed for $\Vie{i}$ and $\Vio{i}$ within each tree.
Informally, we will go around the cycle twice, the first time visiting every vertex \(v_i\) with even index \(i\)
(skipping the odd indexes). During the second traversal we will
skip the vertices visited in the first traversal and visit the vertices with odd index.
When stopping on a $v_i$, we enumerate $\Pie{i}$, and when skipping a $v_i$, we enumerate $\Pio{i}$.
This allows us to visit every vertex exactly once, and end up on the starting vertex.
The formal construction of the cycle is as follows:

\begin{enumerate}
    \item Start at $v_0$.
    \item For $i = 0$ to $k-1$:
    \begin{enumerate}
        \item Visit the vertices from $\Pie{2i}$, 
          starting at a vertex at depth $2$ or $4$ in $T_{2i}$ (or none if
          $\Pie{2i}$ is empty) and 
          ending on a vertex at depth $2$ in $T_{2i}$
        (or at $v_{2i}$ itself if $\Pie{2i}$ is empty).
        \item If $\Pio{2i+1}$ is not empty, go to its first vertex, which is at
          depth $1$ in $T_{2i+1}$, through a walk of length $4$ (or $2$ if
          $\Pie{2i}$ was empty). Then,
        traverse $\Pio{2i+1}$, ending on a vertex at depth $1$ or $3$ in $T_{2i+1}$.
        \item Move to $v_{2i+2}$. If $\Pio{2i+1}$ was empty, this is a walk of length $2$ through $v_{2i} - v_{2i+1} - v_{2i+2}$.
        Otherwise, this is a walk of length $2$ or $4$ from the last vertex of $\Pio{2i+1}$.
    \end{enumerate}
    \item We can now go through $\Pie{2k}$, ending on a vertex at depth $2$ in $T_{2k}$, or stay at $v_{2k}$ if $\Pie{2k}$ is empty.
    Once this is done, we have visited all vertices of the $\Vie{i}$ for every even $i$ and all vertices of the $\Vio{i}$ for every odd $i$.
    \item We then go to the first vertex of $\Pio{0}$ which is at depth $1$ in $T_0$, through a walk of length $4$ (or $2$ if $\Pie{2k}$ was empty) and traverse $\Pio{0}$ ending at a vertex at depth $1$ or $3$ in $T_0$.
    \item 
    We now just have to repeat the previous loop while swapping the roles of
    even and odd. Formally, for $i = 0$ to $k-1$:
    \begin{enumerate}
        \item Move to $v_{2i+1}$. This is a walk of length $2$ or $4$ from
          either the last vertex of $\Pio{2i}$ or from $v_{2i}$ if $\Pio{2i}$
          was empty.
        \item If $\Pie{2i+1}$ is not empty, visit the vertices of $\Pie{2i+1}$ in order, ending on a vertex at depth $2$ in $T_{2i+1}$. Else stay at $v_{2i+1}$.
        \item Go to the first vertex of $\Pio{2i+2}$ which is at depth $1$ in $T_{2i+2}$ through a walk of length $4$ from the last vertex of $\Pie{2i+1}$ (or of length $2$ from $v_{2i+1}$ if $\Pie{2i+1}$ was empty) and traverse $\Pio{2i+2}$, ending at a vertex at depth $1$ or $3$ in $T_{2i+2}$.
    \end{enumerate}
    \item Finally, return to $v_0$ from the last vertex of $\Pio{2k}$ or from
      $v_{2k}$, which is a walk of length $2$ or~$4$. This completes the cycle
      and concludes the proof. \qedhere
\end{enumerate}
\end{proof}

We can now prove \cref{thm:246hcycle}, which concludes the proof of~\cref{thm:main}:

\begin{proof}[Proof of \cref{thm:246hcycle}]
    Given a graph $G$, we first determine in linear time whether $G$ is bipartite.
    If $G$ is bipartite, then every walk of length in $S = \{2,4,6\}$ will keep
    us in the same part of the bipartition, so we cannot hope to ever visit the
    other side. Otherwise, \cref{lem:246hcyclenb} gives us a \Hcycle[2,4,6] in linear time.
\end{proof}

\section{Variants}\label{sec:variants}
In this section we
consider variants of the \Hcycle{} problem studied in \cref{thm:main}.
We start with the case of
infinite sets $S$, for which we can easily show that the \Hcycle{} problem is either
trivial or solvable in linear time. We then move from cycles to paths,
and discuss the complexity of
the \Hpath{} problem and of the \Hpath{} problem with specified endpoints:
we talk in more detail about the case $S = \{1,2,4\}$ for paths with specified
endpoints, which requires some new reasoning.
Last, we discuss the question of whether we can 
make \Hcycle{} tractable
restricting the class of input graphs.

\subsection{Infinite Sets}\label{sec:infinite}

In this section, we discuss the complexity of the \Hcycle{} problem for infinite
sets~$S$. The crucial observation is that most infinite sets~$S$
closed under subtraction of~$2$ are trivial, because they contain either
$\{1,2,3\}$ or $\{1,2,4\}$. (We give the formal details about this classification in
Appendix~\ref{apx:infinite}.)
The two non-trivial sets are the set of all even integers, and the set of all odd integers.
The case of the set of all even integers follows easily from \cref{thm:246hcycle}.
As for the set of all odd integers, we give its characterization below:

\begin{toappendix}
    \subsection{Proofs for Infinite Sets}\label{apx:infinite}
Let $S$ be an infinite set of positive integers that is closed under subtraction of $2$.
Then, if $S = \NN^+$, then $S$ contains $\{1,2,3\}$ and the problem is trivial by \cref{prop:123hcycle}.
Otherwise, either $S$ contains all the even numbers, or $S$ contains all the odd numbers (or both).
In the first case, then $S$ may also contain some odd numbers, but in
particular it must contain $1$, so that $S$ contains $\{1,2,4\}$, which means that every graph admits an \Hcycle{} by \cref{thm:124hcycle}.
In the second case, then $S$ may also contain some even numbers, but in
particular it must contain $2$, so that $S$ contains $\{1,2,3\}$, which means that every graph admits an \Hcycle{} by \cref{prop:123hcycle}.
Hence, the only non-trivial cases are when $S$ is the set of all even numbers, and when $S$ is the set of all odd numbers.

First, for $S$ the set of all even numbers, the characterization essentially follows
from \cref{sec:poly}. Indeed, as $S$ contains $\{2,4,6\}$, we know by
\cref{lem:246hcyclenb} that every non-bipartite graph has an \Hcycle{} and that we can
compute it in linear time. Now, it is easy to see that bipartite graphs cannot
have an \Hcycle{} for similar reasons to the proof of \cref{thm:246hcycle}:
walks of even length cannot make us switch from one part of the bipartition to the other.
Hence, we have shown:

\begin{theorem}\label{thm:infiniteevenhcycle}
    Let $S$ denote the set of all even integers, and let $G$ be a graph.
    Then $G$ admits an \Hcycle{} if and only if $G$ is non-bipartite.
\end{theorem}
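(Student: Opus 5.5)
The plan is to prove the two directions separately.

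\emph{Bipartite graphs admit no such cycle.} Suppose $G$ is bipartite with parts $X$ and $Y$, both non-empty by the definition of bipartiteness. Every walk of even length starts and ends in the same part, since each edge switches parts. So in any $S$-path with $S$ the set of even integers, all vertices lie in the same part as the first vertex. Hence no permutation can visit both $X$ and $Y$, and $G$ has no \Hcycle{}.

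\emph{Non-bipartite graphs admit one.} Suppose $G$ is non-bipartite. Since $\{2,4,6\} \subseteq S$, every \Hcycle[2,4,6] of $G$ is also an \Hcycle{} of $G$. By \cref{lem:246hcyclenb}, such a cycle exists and can be built in linear time.

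Combining the two directions gives the equivalence, and testing bipartiteness in linear time yields a linear-time decision procedure. No step is a real obstacle. The only point needing care is that the parts $X$ and $Y$ are non-empty, which the paper's definition of bipartite guarantees. For the degenerate single-vertex graph, which is not bipartite under that definition, the trivial one-vertex permutation is a cycle vacuously.
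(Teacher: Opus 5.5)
Your proof is correct and follows the paper's argument: even-length walks never leave one part of the bipartition, and for non-bipartite graphs the inclusion $\{2,4,6\}\subseteq S$ lets you invoke \cref{lem:246hcyclenb}. Your remark on the single-vertex graph covers an edge case the paper passes over silently, since that lemma's proof presupposes an odd cycle. Treating $(v)$ as a cycle there is a convention rather than literally vacuous, because the closing condition of \cref{def:hpath} would ask for a closed walk of positive even length at $v$; it is, however, the same convention the paper adopts in its proof of \cref{thm:124hcycle}.
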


Then, for $S$ the set of all odd numbers, the classification does not follow
from our earlier results. We will now show the following:
\end{toappendix}
\begin{theoremrep}\label{thm:infiniteoddhcycle}
    Let $S$ denote the set of all odd integers, and let $G$ be a graph.
    Then $G$ admits an \Hcycle{} if and only one of the two following cases
    holds: $G$ is non-bipartite,
    or $G$ is bipartite with parts of equal cardinality.
\end{theoremrep}

\begin{proofsketch}
    Let $G$ be a graph.
    If $G$ is non-bipartite, then we can connect any two vertices by a walk of odd length.
    Indeed, we can go through an odd cycle as many times as needed to adjust the parity of the length of that walk.
    Hence, any ordering of the vertices of $G$ is an \Hcycle{}.

    If $G$ is bipartite, the only odd-length walks are the ones connecting vertices from one part to the other.
    Hence, in order to have an \Hcycle{}, the ordering of the vertices must alternate
    between the two parts, which implies that the two parts must have equal size.
    Conversely, if the two parts have equal size, then any ordering that
    alternates between the two parts is an \Hcycle{} because any pair of vertices from different parts
    are connected by a walk of odd length.
\end{proofsketch}

\begin{proof}
    Let $G$ be a graph.
    We will first show that $G$ admits an \Hcycle{} if $G$ is non-bipartite.

    Suppose $G$ is non-bipartite. Then let $C = (v_1, \dots, v_n)$ be any ordering of $V(G)$.
    We claim $C$ is an \Hcycle{} of $G$.
    Indeed, since $G$ is non-bipartite, it contains an odd cycle $O$. 
    For any two consecutive vertices $v_i$ and $v_{i+1}$ in $C$, we will show there exists an odd-length walk between them in $G$.
    Let $P_1$ be any walk from $v_i$ to some vertex $u \in V(O)$, and let $P_2$
    be any walk from $u$ to $v_{i+1}$.
    If $|P_1| + |P_2|$ is odd, then $P_1  P_2$ (the concatenation of \(P_1\) with \(P_2\)) is a walk of odd length between $v_i$ and $v_{i+1}$.
    If $|P_1| + |P_2|$ is even, then traverse $O$ once between $P_1$ and $P_2$, and $P_1 O P_2$ is a walk of length $|P_1| + |O| + |P_2|$, which is odd.
    Since any two consecutive vertices in $C$ are connected by an odd-length walk in $G$, $C$ is an \Hcycle{} of $G$.

    Now it remains to show that if $G$ is bipartite, then $G$ has an \Hcycle{} if and only if its bipartition has equally-sized parts.

    Suppose $G$ is bipartite with parts $A$ and $B$. First, we show that if $G$ admits an \Hcycle{}, then $|A| = |B|$.
    Let $C = (v_1, \dots, v_n)$ be an \Hcycle{} of $G$. In a bipartite graph, a
    walk between two vertices has even length (resp., odd length) if and only if
    it joins two vertices that are in the same parts (resp., in different
    parts).
    Since each consecutive pair $(v_i, v_{i+1})$ in $C$ must be connected by an odd-length walk,
    they must belong to different parts. This means $C$ must alternate
    between $A$ and $B$, so we must have $|A| = |B| = n/2$.

    Conversely, suppose $G$ is bipartite with equally-sized parts $A$ and $B$, where $|A| = |B| = n/2$.
    Consider any ordering $C = (a_1, b_1, a_2, b_2, \dots, a_{n/2}, b_{n/2})$ that alternates between vertices of $A$ and $B$.
    Since consecutive vertices in $C$ belong to different parts, and because
    any two vertices in different parts are connected by a walk which has
    odd length, we know that there is
    indeed an odd-length walk between each consecutive pair. Therefore, $C$ is an \Hcycle{} of $G$.
\end{proof}

 \subsection{Hamiltonian Path Variants}\label{sec:pathandse}

In this section, we discuss the complexity of the two variants of the \Hcycle{}
problem introduced in \cref{sec:prelim}: the \Hpath{} problem, and the \Hpath{} problem with specified endpoints.
The complexity of these problems on the various possible finite sets~$S$ does
not always follow from our results on the \Hcycle{} problem, so we explain the
situation in this subsection and in the next subsection.

First, for any set~$S$, it is clearly always possible to reduce from the
\Hpath{} problem or
from the \Hcycle{} problem to the \Hpath{} problem with specified endpoints, under Cook reductions.
The former is achieved by calling the oracle for \Hpath{} with specified
endpoints on all pairs of vertices, and the latter is achieved by calling that
oracle on all pairs of vertices connected by a walk having a length in~$S$.
However, some of the proofs presented in this paper can be modified to show a better complexity for the \Hpath{} problems.

As most of these problems behave the same way as their \Hcycle{} counterpart, we
will highlight some cases here: we 
present the detailed explanations of every set for both variants (and their proofs
when applicable) in Appendix~\ref{apx:pathandse}.
The only result that has been proved to be different is for the \Hpath[1,2,4]
with specified endpoints. 
Whereas \Hcycle[1,2,4] was trivial (\cref{thm:124hcycle}), for \Hpath[1,2,4] with specified
endpoints the problem can be solved in linear time by excluding a small explicit
family of negative instances. We present this result in the next section.

The only other sets $S$ where the complexities of the path and cycle problems
are not known to coincide
is $S = \{2,4\}$ and $S = \{2,4,6\}$, where we leave some problems
open: see Appendix~\ref{apx:pathandse} for more details.

\begin{toappendix}
\subsection{Proofs for Hamiltonian Path Variants}\label{apx:pathandse}

In this section, we detail the status of the path versions of our main problem
(with and without specified endpoints), for all sets $S$, starting with the known results:
\begin{itemize}
  \item The \Hpath[1] problem, with or without specified endpoints, is well-known to be NP-hard through an easy reduction from the \Hcycle[1] problem.
  \item Concerning the \Hpath[1,2] problems, it is not directly a known result but it is easy to see that the reduction proposed for the \Hcycle[1,2] problem in~\cite{On_graphs_with_Underg_1978} also
    allows us to reduce the \Hpath[1] problem (resp., the \Hpath[1] problem with
    specified endpoints) to the \Hpath[1,2] problem (resp., the \Hpath[1,2]
    problem with specified endpoints).
  \item As for the \Hpath[1,2,3] problems, as we reviewed in
    \cref{prop:123hcycle}
    the proof from~\cite{karaganis1968cube,sekanina1960ordering} states that every graph has a \Hpath[1,2,3] starting and ending
at any pair of vertices, which means that both problems are trivial.
\end{itemize}
Now, we talk about what can be derived from our new results:
\begin{itemize}
  \item 
Because we needed it in \cref{sec:2}, we already proved that the \Hpath[2]
problem with specified endpoints is NP-hard.
As for the variant without specified endpoints, we will prove it to also be NP-hard in \cref{thm:2hpath}.
\item 
As every graph contains a \Hcycle[1,2,4], it must also be the case that
it contains a \Hpath[1,2,4],
simply obtained by breaking the cycle.
However, the situation for specified endpoints \Hpath[1,2,4] is quite different. Indeed, we show in
\cref{thm:124sehpath} that this decision problem is non-trivial but that
it is actually solvable in linear time.
\item The characterization of the \Hpath[2,4,6] problem is exactly the same as the one for the \Hcycle[2,4,6] problem, which is that either
the given graph is bipartite and thus it cannot contain such a path, or it is not and then it must contain a \Hcycle[2,4,6] which is also a \Hpath[2,4,6] of that graph.
Concerning the specified endpoints version, we believe it is possible to modify
the algorithm of the proof of \cref{thm:246hcycle} to be able to start and end
anywhere (so that the problem would be solvable in linear time, and the path with specified
endpoints would exist precisely on non-bipartite graphs, for any pair of
endpoints in that case),
but we leave this problem open.
\item The last problems we are left with are the \Hpath[2,4] problems (with and without
  specified endpoints).
  For the \Hpath[2,4] problem with specified endpoints, the problem admits a Cook
  reduction from the \Hcycle[2,4] problem, so it is NP-hard under Cook
  reductions. For the \Hpath[2,4] problem without specified endpoints, we leave
  its
  complexity open, but we conjecture that it is NP-hard like
  \Hcycle[2,4].
\item Finally, the \Hpath[1,3] problems (with or without prescribed endpoints),
  and the problems with finite sets $S = \{1, \ldots, 2k+1\}$ of odd
  integers for some $k>0$, are left open by the present work, like the corresponding \Hcycle{}
  problem.
\end{itemize}
\end{toappendix}

 \begin{toappendix}
\subsection{Proofs for the \safebm{$\{2\}$}-Hamiltonian Path Problem}\label{apx:2hpath}

We prove in this appendix that the \Hpath[2] problem is NP-complete, by a reduction from
the \Hpath[2] problem with specified endpoints.

\begin{theorem}\label{thm:2hpath}
    The \Hpath[2] problem is NP-complete.
\end{theorem}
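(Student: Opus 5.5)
Membership in NP is immediate, since a permutation of the vertices is a certificate that can be checked in polynomial time. For hardness, I will reduce from the \Hpath[2] problem with specified endpoints, which is NP-hard by \cref{thm:2sehpath}. Given $G$ and endpoints $\alpha \neq \beta$, I will build $H$ by attaching a pendant gadget to $\alpha$ and another to $\beta$. Each gadget is designed so that its vertices must form a prefix (respectively a suffix) of any \Hpath[2] of~$H$, and so that this prefix must be left at~$\alpha$ (respectively the suffix must be entered at~$\beta$).

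For the gadget I will try to reuse the forcing idea from the proof of \cref{thm:2hcycle}. A vertex whose only length-$2$ partner is a single vertex $p$ must be an endpoint of any \Hpath[2], and its unique neighbour in the path is $p$. Concretely, at $\alpha$ I attach a path $\alpha - a_1 - a_2 - a_3$. The length-$2$ partners are then:
\begin{itemize}
\item $a_3$ has only $a_1$;
\item $a_2$ has $\alpha$ and $a_2$'s other partners, namely none beyond $\alpha$ and the $G$-neighbours reached through $a_1$? No: $a_2$ reaches only $\alpha$ via $a_1$;
\item $a_1$ has $a_3$ and the $G$-neighbours of $\alpha$.
\end{itemize}
So $a_3$ and $a_2$ each have a unique partner ($a_1$ and $\alpha$ respectively), and both must be endpoints of any \Hpath[2]. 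Doing the same at $\beta$ would create four forced endpoints, which is too many. I will therefore aim for exactly one forced endpoint per side, with the forced continuation leading into $G$ at the right vertex.

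The plan is to find small gadgets, possibly by computer search as the paper did, in which the length-$2$ graph of the gadget vertices is itself a path. One end of that path is a degree-$1$ vertex (a forced endpoint), and the other end is adjacent in the length-$2$ graph only to $\alpha$. Appending $\alpha$'s gadget path before $P$ and $\beta$'s gadget path after $P$ then gives the forward direction. For example, attaching to $\alpha$ the path $\alpha - a_1 - a_2$ gives a length-$2$ graph with edges $a_2 a_1$? No, $a_2$ and $a_1$ are adjacent, so that is length~$1$. The actual length-$2$ edges are $a_2\alpha$ and $a_1 x$ for $x \in N_G(\alpha)$. This leaves $a_1$ free-floating, which is bad. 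I expect to add a triangle or extra leaves to fix this, in the spirit of the triangle trick from \cref{thm:2sehpath}.

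The backward direction goes as follows. Since every \Hpath[2] has exactly two endpoints, and the two forced endpoints lie in the two gadgets, the forcing chains inside each gadget are fully determined. They exit to $\alpha$ and $\beta$ respectively. A further point must also be checked: no vertex of $G$ is adjacent in the length-$2$ graph to a gadget vertex other than through these forced steps, or if it is, such use leads to a contradiction. Once this holds, deleting the gadget vertices leaves a \Hpath[2] of $G$ from $\alpha$ to $\beta$. The main obstacle is designing the gadget so that gadget vertices adjacent to $\alpha$ (which have $G$-neighbours of $\alpha$ as length-$2$ partners) cannot be used as shortcuts. I would first try making $\alpha$ itself the forced exit, and attach at $\alpha$ a gadget where the vertex adjacent to $\alpha$ is consumed in a forced chain; a computer search over gadgets with up to about $8$ vertices would settle the details.
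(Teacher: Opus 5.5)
Your architecture matches the paper's. You reduce from the specified-endpoint version (\cref{thm:2sehpath}) by hanging a pendant gadget on $\alpha$ and another on $\beta$, which must form a prefix, resp.\ suffix, of any \Hpath[2] of $H$; NP membership is fine. But the proposal stops exactly where the proof has to start. No gadget is ever given, both candidates you try fail (as you notice), and the rest is deferred to a computer search. This is not a routine detail. Suppose the gadget hung from $\alpha$ by an edge $\alpha g$ is a tree. Then length-$2$ walks preserve the parity of the distance to $\alpha$. Gadget vertices at even distance reach the rest of $H$ only through $\alpha$, and those at odd distance (including $g$) only through $g$ and $N_G(\alpha)$. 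Now look at maximal runs in a \Hpath[2] of $H$. Every run of the even class must contain a path endpoint, since it cannot have $\alpha$ on both sides. Every run of the odd class without an endpoint is just $\{g\}$ squeezed between two neighbours of $\alpha$. With tree gadgets on both sides you therefore either need more than two endpoints, or you leave $g$ unpinned next to arbitrary neighbours of $\alpha$. That is exactly what went wrong in your two attempts. An odd cycle in the gadget is indispensable, and the paper's gadget does not have the ``length-$2$ graph is a path'' shape you aim for.

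The backward direction also overclaims. Two unique-partner vertices do force the endpoints, but they do not make ``the forcing chains inside each gadget fully determined''. A unique-partner vertex pins only its one path-neighbour. Moreover, the gadget vertex adjacent to $\alpha$ always has all of $N_G(\alpha)$ as length-$2$ partners, so the path may leave the gadget early. This is the shortcut you flag as ``the main obstacle'' but do not resolve.

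The paper fixes both issues with an explicit gadget: a triangle $d_0d_1d_2$ plus a pendant path $d_0 - d_3 - d_4$, joined by the edge $\alpha d_0$, with a copy joined by $\beta d_0'$. Since $d_4$'s only partner is $d_0$, the endpoints are $d_4, d_4'$, and $d_0$ is consumed second, as you anticipated. Suppose the path then left through $d_0$ into $N_G(\alpha)$. The vertices $d_1,d_2,d_3$, whose only outside partner is $\alpha$, could later be entered only from $\alpha$ and never left, contradicting that the path ends at $d_4'$. Likewise, stepping to $\alpha$ before exhausting them strands the rest. So the whole gadget precedes $\alpha$ (e.g.\ $d_4,d_0,d_2,d_3,d_1,\alpha$), and symmetrically for $\beta$. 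No length-$2$ walk joins two distinct vertices of $G$ through a gadget vertex, so the middle segment is a \Hpath[2] of $G$ from $\alpha$ to $\beta$. Without such a concrete gadget and this trapping argument, your proposal does not yet prove the theorem.
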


\begin{proof}
    We reduce from the \Hpath[2] problem with specified endpoints which we
    proved to be NP-complete in \cref{thm:2sehpath}.

    Let $G$ be a graph and $\alpha, \beta \in V(G)$ two different vertices of $G$.
    Let us construct in polynomial time a graph $H$ and show that $G$ has a \Hpath[2] between $\alpha$ and $\beta$ if and only if $H$ contains a \Hpath[2].

    We construct $H$ by connecting $G$ to two copies of the gadget $D$ represented in \cref{fig:gadget2hpath}
    by adding an edge from $\alpha$ to the vertex $d_0$ in one copy, and an edge
    from $\beta$ to the vertex $d_0$ in the other copy.

    \begin{figure}
        \centering
        \begin{tikzpicture}
            \node[circle, draw] (0) at (0,0) {$d_0$};
            \node[circle, draw] (1) at (-2,0) {$d_1$};
            \node[circle, draw] (2) at (-1,2) {$d_2$};
            \node[circle, draw] (3) at (2,0) {$d_3$};
            \node[circle, draw] (4) at (4,0) {$d_4$};
            
            \draw (0) -- (1);
            \draw (0) -- (2);
            \draw (2) -- (1);
            \draw (3) -- (4);
            \draw (0) -- (3);
        \end{tikzpicture}
        \caption{Dangling gadget $D$ for the proof of \cref{thm:2hpath}}\label{fig:gadget2hpath} 
    \end{figure}

    Formally, $H$ is constructed by copying $G$ and adding $D$ and $D'$ to $G$ (with $D'$ a copy of $D$ such that $V(D') = \set{d_0',d_1',d_2',d_3',d_4'}$),
    which we will connect to~$G$ with the edges $(\alpha, d_0)$ and $(\beta, d_0')$.

    \medskip

    We first show the forward direction of the reduction.
    Let us assume that $G$ has a \Hpath[2] $P$ starting at $\alpha$ and ending at $\beta$.
    Then one can easily verify that the path $P' = (d_4, d_0, d_2, d_3, d_1) P (d_1', d_3', d_2', d_0', d_4')$
    is a \Hpath[2] of $H$.

    \medskip

    We now show the more interesting backward direction.
    Suppose $H$ has a \Hpath[2] $Q$.
    Because $d_4$ and $d_4'$ only have one vertex at distance 2 (resp., $d_0$ and
    $d_0'$), they must be the endpoints of $Q$. Without loss of generality, up to
    reversing $Q$, suppose $d_4$ is the start and $d_4'$ is the end.
    From $d_4$ we must go to $d_0$ and now we have the possibility of going out of $D$.
    If we do, then the vertices $d_1,d_2,d_3$ of $D$ are not yet
    visited, and the only vertex of $G$
    to which they are connected by a walk of length $2$ is~$\alpha$. This means we can only enter $D$ again by using $\alpha$
    which prevents us from going out afterwards and implies that we cannot
    reach $d_4'$ which must be at the end of $Q$.
    For this reason, we need to visit all vertices of \(D\) before going to \(\alpha\).
    We will now eventually leave $D$ and it will necessarily be via $\alpha$ and we
    will not re-enter $D$ again afterwards.
By a symmetric reasoning, we can only take vertices of $D'$ after $\beta$ and all vertices of $D'$ must be after $\beta$.

    In summary, we have shown that $Q$ is composed of three $\{2\}$-paths:
    \begin{itemize}
      \item One from $d_4$ to $\alpha$ by taking only vertices of $D$;
      \item one from $\alpha$ to $\beta$ by taking only vertices of $G$; 
      \item and one from $\beta$ to $d_4'$ by taking only vertices of $D'$.
      \end{itemize}
    Then by removing the vertices before $\alpha$ and those after $\beta$ we get a
    \Hpath[2] from $\alpha$ to $\beta$ in~$G$.

    This completes the proof that the \Hpath[2] problem is NP-complete.
\end{proof}
\end{toappendix}

 \subsection{\texorpdfstring{\set{1,2,4}-Hamiltonian Path}{\{1,2,4\}-Hamiltonian Path} Problem with Specified Endpoints}\label{sec:124sehpath}

\begin{toappendix}
\subsection{Proofs for the \texorpdfstring{\set{1,2,4}-Hamiltonian Path}{\{1,2,4\}-Hamiltonian Path} Problem with Specified Endpoints}
\end{toappendix}

In this section, we prove that the \Hpath[1,2,4] problem with specified
endpoints can be solved in linear time.
For those proofs, we will need to introduce the notions of \emph{star} and
\emph{bistar}.
A \emph{star centered at a vertex $x$} is a graph with at least two vertices such that every vertex different from $x$ is adjacent only to $x$.
A \emph{bistar centered at vertices $a$ and $b$} is a graph such that $a$ and $b$ are adjacent and every vertex different from $a$ and $b$ is
adjacent to only either $a$ or $b$ (but not both), and both $a$ and $b$ have degree at least $2$. These two definitions are illustrated in \cref{fig:star}.
We now introduce a lemma that will help us prove the desired theorem.

\begin{figure}[ht]
    \centering
    \begin{tikzpicture}

\node[circle, draw] (x) at (0,0) {$x$};

        \foreach \i/\angle in {1/90,2/18,3/-54,4/-126,5/162} {
            \node[circle, draw] (s\i) at (\angle:2) {$v_{\i}$};
            \draw (x) -- (s\i);
        }

\node[circle, draw] (a) at (6,0) {$a$};
        \node[circle, draw] (b) at (8,0) {$b$};
        \draw (a) -- (b);

\node[circle, draw] (a1) at (5.2,1.2) {$a_1$};
        \node[circle, draw] (a2) at (5.2,-1.2) {$a_2$};
        \node[circle, draw] (a3) at (4.6,0) {$a_3$};
        \draw (a) -- (a3);
        \draw (a) -- (a1);
        \draw (a) -- (a2);

\node[circle, draw] (b1) at (8.8,1.2) {$b_1$};
        \node[circle, draw] (b2) at (8.8,-1.2) {$b_2$};
        \draw (b) -- (b1);
        \draw (b) -- (b2);

    \end{tikzpicture}
    \caption{A star centered at $x$ (left) and a bistar centered at $a$ and $b$ (right).}\label{fig:star}
\end{figure}

\begin{lemmarep}\label{lem:124sehpathtree}
    Let $T$ be a tree with $\lvert V(T) \rvert > 2$ and $a,b \in V(T)$ two vertices of $T$ such that $a$ and $b$ are not adjacent.
    Then there exists a \Hpath[1,2,4] between $a$ and $b$. 
\end{lemmarep}

\begin{proofsketch}
    The proof is done by induction on the number of vertices of $T$.
    The idea is to divide the tree into multiple vertex disjoint subtrees depending on the shape of $T$,
    and to use the induction hypothesis as well as \cref{lem:124hcycletree} to construct simple $\{1,2,4\}$-paths
    of these subtrees, which can then be concatenated to form the desired \Hpath[1,2,4] between $a$ and $b$.
\end{proofsketch}

\begin{proof}
    We will prove this result by induction on the number of vertices of $T$.

    \textbf{Base case.}
    There is only one tree of size $3$,
which is the path of size $3$.
    The only two vertices that are not adjacent are the vertices at both ends of
    the path, and we can achieve a \Hpath[1,2,4] between them by taking two steps of length~$1$.

    \textbf{Induction step.}
    Assume that the statement holds for all trees with strictly fewer than $n \ge 4$ vertices.
    Let $T$ be a tree with $|V(T)| = n$, and let $a,b \in V(T)$ be two non-adjacent vertices.

    Because $T$ is a tree, there is a unique simple $\{1\}$-path between $a$ and
    $b$, which we write $P_{ab} = (a,v_1, \ldots, v_k,b)$.
    Because $a$ is not adjacent to $b$, we have $k \ge 1$. 
    Let $T'$ be the forest obtained from $T$ by removing every edge of $P_{ab}$,
    and let $T_a$ be the connected component of $T'$ containing $a$.

    If $k \ge 2$, then we can define two paths $P_a$ and $P_b$ that will form a \Hpath[1,2,4] of $T$ when concatenated.
    If $a$ is a leaf, then we define $P_a$ as $(a)$. Else, $T_a$ has at least two vertices and 
    we use \cref{lem:124hcycletree} to get a \Hcycle[1,2,4] of $T_a$ $(a,a_1,\dots,a_m)$ where $a_1$ is adjacent to $a$.
    We then define $P_a$ as $(a,a_m,\dots,a_1)$.
    By removing every vertex of $T_a$ from $T$, we get a tree $T_{v_1 b}$ in which
    $v_1$ and $b$ are non-adjacent and $|V(T_{v_1 b})| < n$ because at least
    we removed $a$.
    This means we can use the induction hypothesis on $T_{v_1 b}$ to get a \Hpath[1,2,4] from $v_1$ to $b$, which we call $P_b$.
    Concatenating $P_a$ and $P_b$ yields a path from $a$ to $b$ which goes through every vertex of $T$.
    Every step in $P_a$ or $P_b$ is of length $1$, $2$ or $4$ due to their construction.
    Furthermore, the transition between them is either from $a$ to $v_1$ (of length $1$) if $a$ is a leaf,
    or else from $a_1$ to $v_1$ through $a$ (of length $2$). Thus, we have
    constructed a \Hpath[1,2,4] of $T$ from $a$ to $b$.

    If $k = 1$, let $c$ be the common neighbor of $a$ and $b$.
    Let $T_c$ be the connected component of $T'$ containing $c$ and $T_b$ be the one containing $b$.
There are two cases: either $T_c = \set{c}$, or $T_c$ contains at least two
    vertices.
    In the first case, we can define $P_a$ in the same way as above. Namely,
    if $a$ is a leaf, $P_a = (a)$. Alternatively, we use \cref{lem:124hcycletree} to get a \Hcycle[1,2,4] of $T_a$ $(a,a_1,\dots,a_m)$ where $a_1$ is adjacent to $a$
    and we define $P_a = (a,a_m,\dots,a_1)$. We then define $P_b$ similarly.
    Namely, if $b$ is a leaf, $P_b = (b)$. Else we use the same lemma to get a \Hcycle[1,2,4] of $T_b$ $(b,b_1,\dots,b_l)$
    and we can define $P_b$ as $(b_1, \dots, b_l, b)$. Then $P = P_a (c) P_b$ is a \Hpath[1,2,4] of $T$ from $a$ to $b$.
    In the first case,
we have $|V(T_c)|\geq 2$ and we can use \cref{lem:124hcycletree} to get a \Hcycle[1,2,4] of $T_c$ $(c,c_1,\dots,c_j)$ where $c_1$ is adjacent to $c$.
    If both $a$ and $b$ are leaves, then $P = (a, c, c_j, \dots, c_1, b)$ is a \Hpath[1,2,4] of $T$ from $a$ to $b$.
    Otherwise, if both $T_a$ and $T_b$ are stars centered at $a$ and $b$ respectively, such that $V(T_a) = \set{a,a_1,\dots,a_m}$ and $V(T_b) = \set{b,b_1,\dots,b_l}$,
    then $P = (a,a_1,\dots,a_m, b_1, \dots, b_l, c,c_j, \dots, c_1, b)$ is a \Hpath[1,2,4] of $T$ from $a$ to $b$.
    Finally, if one of $T_a$ or $T_b$ is not a star centered at $a$ or $b$, then one of them must have a vertex at distance $2$ from $a$ or $b$.
    Without loss of generality, suppose $T_a$ has a vertex $a_1$ at distance $2$
    from $a$. Then we can use the induction hypothesis on~$T_a$ to get a \Hpath[1,2,4] $P_a$ from $a$ to $a_1$.
    We now define $P_b$ exactly as above and $P = P_a (c_1, \dots, c_j, c) P_b$ is a \Hpath[1,2,4] from $a$ to $b$.
\end{proof}

We can now show our characterization for the \Hpath[1,2,4] problem with
specified endpoints:

\begin{theoremrep}\label{thm:124sehpath}
    Let $G$ be a graph and $a$ and $b$ be two different vertices of $G$.
    Then $G$ admits a \Hpath[1,2,4] from $a$ to $b$ if and only if $G$ is not a bistar centered at $a$ and $b$.
\end{theoremrep}

\begin{proofsketch}
    Let $G$ be a graph and $a$ and $b$ be two different vertices of $G$.
    If they are non-adjacent, we conclude by \cref{lem:124sehpathtree}. If $a$
    and $b$ are adjacent and $G$ is not a bistar, then 
    up to exchanging $a$ and $b$ we find a neighbor $v_1$ of $a$ which is not $b$ and
    which has itself a neighbor $v_2$ which is also not $b$. We take a spanning
    tree of~$G$ including the edges $\edge{a,v_1}$ and $\edge{v_1,v_2}$, and we
    use
\cref{lem:124sehpathtree} 
    and finally obtain a \Hpath[1,2,4] of $G$.

    Now, if $G$ is a bistar, a path from $a$ can start from neighbors of $a$ (resp. neighbors of $b$) but will not be able to reach neighbors of $b$ (resp. $a$) without taking $b$ (which must be at the end).
    Thus a bistar centered at $a$ and $b$ cannot contain a \Hpath[1,2,4] from $a$ to $b$, which concludes the proof.
\end{proofsketch}

\begin{proof}
    Let $G$ be a graph and $a$ and $b$ be two different vertices of $G$.
    If $a$ and $b$ are not adjacent in $G$, then let $T$ be a spanning tree of $G$.
    In $T$, $a$ and $b$ are also not adjacent. Thus, we can use \cref{lem:124sehpathtree} to get a \Hpath[1,2,4] of $T$ from $a$ to $b$,
    which is also a \Hpath[1,2,4] of $G$ from $a$ to $b$.

    Suppose $a$ and $b$ are adjacent and that $G$ is not a bistar centered at $a$ and $b$,
    then there exists two vertices $v_1, v_2 \in G$ different from $a$ or $b$ such that $v_2$ is adjacent to $v_1$ and $v_1$ is adjacent to $a$ or $b$.
    Without loss of generality, suppose $v_1$ is adjacent to $a$.
    Then let $T$ be a spanning tree of $G$ such that \edge{a,v_1} and \edge{v_1,v_2} are edges of $T$.
    Let $T'$ be the tree obtained from $T$ by removing the edge \edge{a,b}, let $T_a$ be the connected component of $T'$ containing $a$ and $T_b$ the one containing $b$.
    Then we can use \cref{lem:124sehpathtree} to get a \Hpath[1,2,4] $P_a$ from $a$ to $v_2$ in $T_a$.
    We can also use \cref{lem:124hcycletree} to get a \Hcycle[1,2,4] $(b,b_1,\dots,b_l)$ where $b$ and $b_1$ are adjacent.
    Then $P = P_a (b_1, \dots, b_l, b)$ is a \Hpath[1,2,4] of $T$ and of $G$ from $a$ to $b$.

    Finally, we must show that a bistar cannot contain a \Hpath[1,2,4]
    connecting its two centers.
    Suppose $G$ is a bistar centered at $a$ and $b$ and let $(a_1, \dots, a_m)$
    with $m \geq 1$ be the neighbors of $a$ that are different from $b$, and
    let $(b_1, \dots, b_l)$ with $l \geq 1$ be the neighbors of~$b$ that are different from $a$.
    From $a$ we can reach any vertex. If our path starts by going from $a$ to
    some $a_i$ for some $i$, then the only vertices it can reach from now on are either other $a_j$ for $j \neq i$ or $b$.
    Going to another $a_j$ does not change that fact which means there will be
    no way to reach a $b_i$ without taking $b$. Because $b$ must be taken last,
    we have excluded the case of a path that starts with some $a_i$.
    However, the only other solution is to start by taking a $b_i$, and this
    does not work for the exact same reasoning: we cannot visit the $a_j$
    without visiting $a$.
    Thus there is no \Hpath[1,2,4] from $a$ to $b$ in $G$.
\end{proof}

 \subsection{Restricted Graph Classes}\label{sec:restricted}

\begin{toappendix}
\subsection{Proofs for Bounded Cliquewidth}\label{apx:boundedcliquewidth}
\end{toappendix}

In this section, we last turn to one last question: can the \Hcycle{} problem
be made tractable if we assume that the input graphs are in restricted graph
classes? This is motivated by the fact that the Hamiltonian cycle problem, while
NP-hard in general, is known to be tractable on many restricted classes of
graphs, in particular 
bounded-treewidth and bounded-cliquewidth graphs~\cite{bergougnoux2020optimal, espelage2001solve}. Hence, one can ask
whether the same is true of the \Hcycle{} problem for sets $S$ other than $S = \{1\}$.

There are related results in this direction for the case $S = \{1,2\}$ and when
the input graph is a tree.
Namely, it is known that a tree $T$ has a
\Hcycle[1,2] if and only if it is a so-called caterpillar graph~\cite{Trees_with_Hami_Harary_1971}.
Further, the trees having a \Hpath[1,2] have been characterized
in~\cite{Hamiltonian_Pat_Radosz} and can be recognized in linear time.

We now claim that, for arbitrary finite sets~$S$, 
tractability extends beyond trees to bounded-treewidth and even
bounded-cliquewidth graphs. Namely:

\begin{theoremrep}\label{thm:boundedtw}
  For any fixed finite set $S \subseteq \NN^+$ and bound $k \in \NN$, the \Hcycle{} problem on
  input graphs of cliquewidth at most $k$ is in polynomial time.
\end{theoremrep}

\begin{proofsketch}
The proof uses the fact that bounded cliquewidth is preserved by MSO
interpretations~\cite{bonnet2022model,blumensath2006recognizability}. 
So, from the input graph $G$, we construct the graph $G_S$ that
has the same vertices as $G$ and connects two vertices if and only if they are
connected by a walk of a length in $S$:
this is definable by an MSO interpretation, so $G_S$ also has bounded
cliquewidth.
It is now easy to see that $G_S$ admits a Hamiltonian cycle if and only
if $G$ admits an \Hcycle. From there, thanks to the result from~\cite{bergougnoux2020optimal}, there exists an XP algorithm
to solve the Hamiltonian cycle problem on graphs with bounded
cliquewidth, which allows us to conclude.
\end{proofsketch}

\begin{proof}
Let $S=\{d_{1},\dots,d_{m}\}\subseteq \NN^+$ be the fixed finite set of
authorized lengths. Let $k$ be the constant cliquewidth bound, and let
$\mathcal{C}$ be a class of finite graphs of cliquewidth bounded by~$k$.

We define for each $d_i \in S$ the following FO formula:
$\phi_{d_i}(x,y) = \exists z_2 \cdots \exists z_{d_i} (E(x,z_2) \wedge E(z_2,z_3) \wedge \dots \wedge E(z_{d_i},y))$ with $E(x,y)$ the edge relation of the graph $G$.
We also define the following formula: $\phi(x,y) = \bigvee_{d_i \in S} \phi_{d_i}(x,y)$.
For each $d_i \in S$, the formula $\phi_{d_i}(x,y)$ defines the constraint that
$x$ and $y$ are connected in $G$ by a walk of length $d_i$, and
the formula $\phi(x,y)$ then defines the constraint that $x$ and $y$ are
connected in $G$ by a walk of length $\ell$ for some $\ell \in S$.

This allows us to define the following MSO interpretation:
For $G\in\mathcal{C}$, define a new graph $H$ with the same vertex set
$V(H)=V(G)$, and with edge set $E(H)$ defined by
$E(H) = \left\{\edge{u,v} | (u,v) \in V(G)^2 \mid G\models \phi(u,v) \right\}$.

We now show that $G$ admits an \Hcycle{} if and only if $H$ admits a Hamiltonian cycle.
First, if $G$ admits an \Hcycle{}, then there exists a cycle $C$ in $G$ that visits every vertex exactly once and such that for every two consecutive vertices
$u$ and $v$ in $C$, there exists a walk of length in $S$ between $u$ and $v$. By definition of $H$, this means that there is an edge between $u$ and $v$ in $H$.
Hence, the cycle $C$ is also a Hamiltonian cycle in $H$.
Conversely, if $H$ admits a Hamiltonian cycle, then there exists a cycle $C$ in $H$ that visits every vertex exactly once and such that for every two consecutive vertices
$u$ and $v$ in $C$, there is an edge between $u$ and $v$ in $H$. By definition of $H$, this means that there exists a walk of length in $S$ between $u$ and $v$ in $G$.
Hence, the cycle $C$ is also an \Hcycle{} in $G$.

Because $\mathcal{C}$ has bounded cliquewidth, and because MSO interpretations preserve bounded cliquewidth~\cite{bonnet2022model,blumensath2006recognizability},
letting $\mathcal{C}'$ be the class of the graphs $H$ that can be obtained from
$G\in\mathcal{C}$ by the above MSO interpretation, then $\mathcal{C}'$
also has bounded cliquewidth: let $k'$ be a bound on the cliquewidth of the
graphs in~$\mathcal{C}'$.
Finally, because the Hamiltonian cycle problem is in XP for graphs of bounded
cliquewidth~\cite{bergougnoux2020optimal}, there exists an algorithm that solves
the \Hcycle{} problem on $\mathcal{C}$ in XP time, defined in the following way.
Given a graph $G \in \mathcal{C}$, we construct the graph $H$ as defined above,
which takes polynomial time since $S$ is fixed, and $H$ must have cliquewidth
bounded by $k'$.
Using the result from~\cite[Theorem~1.1]{oum2006approximating}, we can get a decomposition of $H$ of width at most $f(k')$ for some function $f$.
Then, using the result from~\cite[Theorem~1]{bergougnoux2020optimal}, we can solve the Hamiltonian cycle problem on $H$ in time $O(n^{g(f(k'))})$ for some function $g$.

This gives us an algorithm to solve the \Hcycle{} problem on $\mathcal{C}$ in time $O(n^{g(f(k'))})$, which is polynomial since $k'$ is a constant depending only on $S$ and $k$.
\end{proof}

However, bounded cliquewidth is not the end of the story for the Hamiltonian cycle problem, as it is also known to be tractable on some classes of unbounded cliquewidth,
e.g., proper interval graphs~\cite[Theorem~5]{A_simple_algori_Ibarra_2009}.
Hence, a natural question for future work would be to explore whether the \Hcycle{} problem is also tractable on other classes of unbounded cliquewidth.

\section{Conclusion and Future Work}
\label{sec:conc}
We have studied the \Hcycle{} problem and we determined its 
complexity for most finite sets $S$. We classified the complexity of every result into a
decision tree that allows to restrict our attention to fewer cases. Among those cases, some were known results,
and others have been proven in this paper, namely the NP-completeness when $S = \set{2}$ and $S = \set{2,4}$, the triviality when $S = \set{1,2,4}$,
and the polynomial-time solvability when $S = \set{2,4,6}$. Furthermore, we also
addressed the case of infinite sets $S$, which are either trivial or
polynomial-time solvable;
and
we also studied the variants of paths, with or without specified endpoints, for which we obtained similar complexity classifications for most cases.

We now mention potential future work.
The most immediate open question is to classify the complexity of the \Hcycle{} problem for $S = \set{1,3}$
(and of other finite sets of odd integers): we suspect that this problem is NP-hard, but we have not been able to show it.
However, it is important to notice that in case of bipartite graphs (and for
the same reasons as in \cref{thm:infiniteoddhcycle} on the infinite set of all
odd numbers),
a necessary condition to admit an \Hcycle{} for a finite set $S$ of odd
integers
is to have both parts of the same size. That is nonetheless not a
sufficient condition,
as there may be an ``imbalance'' of vertices of a certain part of the bipartition
in some local part of the graph, that hinders the alternation of
parts in our cycle. Yet, we do not know how to conclude from here, and the
complexity of \Hcycle{} remains open, even when the input graph is restricted to
be bipartite, or to be non-bipartite.
Concerning the path variants of our problems, obviously the same as above holds true for $S$ a finite set of only odd numbers,
but there are also the cases of \Hpath[2,4] and 
\Hpath[2,4,6] with specified endpoints that we have left open.

Another direction would be to study the alternate definition that
does not allow back-and-forth travels in intermediate walks of $S$-paths.
This definition is harder to grasp, and there are a lot more sets to
consider because we can no longer assume without loss of generality
that $S$ is closed by subtraction of $2$. One could also consider the
more stringent definition that requires the back-and-forth travels to form
simple paths, or those where the lengths in~$S$ are required to correspond to
distances between the two vertices (i.e., prohibiting the existence of
shorter paths).

Last, another natural question would be to
look into directed graphs. However, it seems that for any given
finite set $S$, there exist arbitrarily large graphs that do not
contain an \Hcycle. This suggests that the general complexity
answers for directed graphs might be totally different from those on
undirected graphs.

\bibliography{references}

\end{document}